\newcommand{\tronly}[2]{#1}
\DeclarePairedDelimiter{\floor}{\lfloor}{\rfloor}
\setlist[itemize]{leftmargin=*}
\setlist[enumerate]{leftmargin=*}
\newcommand{\Oh}[1]{O(#1)}
\let\emptyset\varnothing
\algnewcommand\algorithmicswitch{\textbf{switch}}
\algnewcommand\algorithmiccase{\textbf{case}}
\algnewcommand\algorithmicassert{\texttt{assert}}
\algnewcommand\Assert[1]{\State \algorithmicassert(#1)}%
\algnewcommand\algorithmiccontinue{\textbf{continue}}
\algnewcommand\algorithmicbreak{\textbf{break}}
\algnewcommand\Continue{\algorithmiccontinue}
\algnewcommand\Break{\algorithmicbreak}
\newcommand\ddfrac[2]{\frac{\displaystyle #1}{\displaystyle #2}}
\newtheorem{theorem}{Theorem}
\newtheorem{lemma}[theorem]{Lemma}
\theoremstyle{definition}
\newcommand\blfootnote[1]{%
  \begingroup
  \renewcommand\thefootnote{}\footnote{#1}%
  \addtocounter{footnote}{-1}%
  \endgroup
}
\title{
\Large\bf Scalable and Probabilistic Leaderless BFT Consensus through Metastability}
\author{%
\tronly{%
Team~Rocket,
Maofan~Yin,
Kevin~Sekniqi,
Robbert~van~Renesse,
Emin~G\"un~Sirer\\
Cornell University\textsuperscript{*}
}{%
Anonymous Submission 134
}
}
\date{}
\begin{document}
\maketitle
\tronly{%
\blfootnote{%
\textsuperscript{*}\emph{Blasts off at the speed of light!} --- Team Rocket\\
An earlier version of this paper published on May 16th 2018, IPFS, was titled \emph{Snowflake to Avalanche: A Novel Metastable Consensus Protocol Family for Cryptocurrencies}.

%Proof of Team Rocket: \verb|SHA256("avalanche ted yin kevin sekniqi robbert van |\\
%\verb|renesse emin gun sirer\n") = 0x 8a5d 2d32 e68b c500 36e4 |\\
%\verb|d086 0446 17fe 4a0a 0296 b274 999b a568 ea92 da46 d533|
}
}{}

\begin{abstract}
This paper introduces a family of leaderless Byzantine fault tolerance protocols, built around a metastable mechanism via network subsampling.
These protocols provide a strong probabilistic safety guarantee in the presence of Byzantine adversaries while their concurrent and leaderless nature enables them to achieve high throughput and scalability.
Unlike blockchains that rely on proof-of-work, they are quiescent and green.
Unlike traditional consensus protocols where one or more nodes typically process linear bits in the number of total nodes per decision, no node processes more than logarithmic bits. It does not require accurate knowledge of all participants and exposes new possible tradeoffs and improvements in safety and liveness for building consensus protocols.
%Finally, unlike traditional consensus protocols and similar to longest-chain protocols, our protocols provide increase in both liveness and safety guarantees when the observed adversarial presence decreases below the maximal expected bound. 

%where $k$ is dependent on the security requirements of the system.
% require $O(n^{2})$
% communication, their communication complexity ranges from
% $O(kn\log n)$
% to
% $O(kn)$
% for some security parameter $k \ll n$. \Jon{This sentence led me to assuem that the two ``best'' protocols were both going to be important; that there would be a reason to use the $O(kn\log n)$ version despite its poorer aymptotics. But when I got to later pages, I see you're only discussing Avalanche after all; the other algorithms are purely steps on the pedagogical garden path, right? Maybe remove the reference to the intermediate steps from the abstract?} 

The paper describes the Snow protocol family, analyzes its guarantees, and describes how it can be used to construct the core of an internet-scale electronic payment system called Avalanche, which is evaluated in a large scale deployment.
Experiments demonstrate that the system can achieve high throughput (3400 tps), provide low confirmation latency (1.35 sec), and scale well compared to existing systems that deliver similar functionality. For our implementation and setup, the bottleneck of the system is in transaction verification.
\end{abstract}

\section{Introduction}

Achieving agreement among a set of distributed hosts lies at the core of countless applications, ranging from Internet-scale services that serve billions of people~\cite{Burrows06,HuntKJR10} to cryptocurrencies worth billions of dollars~\cite{marketcapcryptocurrency}.
To date, there have been two main families of solutions to this problem.
Traditional consensus protocols %, exemplified by PBFT~\cite{castro1999practical},
rely on all-to-all communication to ensure that all correct nodes reach the same decisions with absolute certainty.
Because they usually require quadratic communication overhead and accurate knowledge of membership, they have been difficult to scale
to large numbers of participants.
On the other hand, Nakamoto consensus protocols~\cite{nakamoto2008bitcoin,GarayKL15, PassSS17, SompolinskyZ15, SompolinskyLZ16, SompolinskyZ18, BentovHMN17, EyalGSR16,Kokoris-KogiasJ16,PassS16a, PassS18} have become popular with the rise of Bitcoin.
These protocols provide a probabilistic safety guarantee: Nakamoto consensus decisions may revert with some probability $\varepsilon$.
A protocol parameter allows this probability to be rendered arbitrarily small, enabling high-value financial systems to be constructed on this foundation.
This family is a natural fit for open, permissionless settings where any node can join the system at any time.
Yet, these protocols are costly, wasteful, and limited in performance.
By construction, they cannot quiesce: their security relies on constant participation by miners, even when there are no decisions to be made.
Bitcoin currently consumes around 63.49 TWh/year~\cite{bitcoinpower}, about twice as all of Denmark~\cite{denmarkpower}.
Moreover, these protocols suffer from an inherent scalability bottleneck that is difficult to overcome through simple reparameterization~\cite{CromanDEGJKMSSS16}. % \Jon{I wanted the paper to expand on this point, but I missed it.}

This paper introduces a new family of consensus protocols called Snow.
Inspired by gossip algorithms, this family gains its properties through a deliberately metastable mechanism.
Specifically, the system operates by repeatedly sampling the network at random, and steering correct nodes towards a common outcome.
Analysis shows that this metastable mechanism is powerful: it can move a large network to an irreversible state quickly, where the irreversibility implies that a sufficiently large portion of the network has accepted a proposal and a conflicting proposal will not be accepted with any higher than negligible ($\varepsilon$) probability. 
%{\color{red} We should clarify this.  The current statement is too strong.}
% though it is not always guaranteed to do so.
% \Jon{Not universal in that some applications require a hard guarantee?}

Similar to Nakamoto consensus, the Snow protocol family provides a probabilistic safety guarantee, using a tunable security parameter that can render the possibility of a consensus failure arbitrarily small.
Unlike Nakamoto consensus, the protocols are green, quiescent and efficient; they do not rely on proof-of-work~\cite{DworkN92} and do not consume energy when there are no decisions to be made.
The efficiency of the protocols stems partly from removing the leader bottleneck: each node requires $\Oh{1}$ communication overhead per round and $\Oh{\log{n}}$ rounds in expectation, whereas classical consensus protocols have one or more nodes that require $\Oh{n}$ communication per round (phase).
Further, the Snow family tolerates discrepancies in knowledge of membership, as we discuss later. In contrast, classical consensus protocols require the full and accurate knowledge of $n$ as its safety foundation.

Snow's subsampled voting mechanism has two additional properties that improve on previous approaches for consensus. 
Whereas the safety of quorum-based approaches breaks down immediately when the predetermined threshold $f$ is exceeded,
Snow's probabilistic safety guarantee degrades smoothly when Byzantine participants exceed $f$.
This makes it easier to pick the critical threshold $f$.
It also exposes new tradeoffs between safety and liveness: the Snow family is more efficient when the fraction of Byzantine nodes is small, and it can be parameterized to tolerate more than a third of the Byzantine
nodes by trading off liveness.

To demonstrate the potential of this protocol family, we illustrate a practical
peer-to-peer payment system, Avalanche. In effect, Avalanche executes multiple Snowball (one from the Snow family) instances with the aid of a Directed Acyclic Graph (DAG). The DAG serves to piggyback multiple instances, reducing the cost from $\Oh{\log{n}}$ to $\Oh{1}$ per node and streamlining the path where there are no conflicting transactions.
%This combination of the best features of traditional and Nakamoto consensus involves one significant tradeoff: liveness for \editchange{equivocating proposals}. %conflicting transactions.
%
%\editchange{Specifically, the consensus protocol from the Snowball family does not guarantee liveness in all cases. It does, however, guarantee logarithmic expected rounds for termination given a propoer initialization heuristic (Section~\ref{sec:liveness}). It also guarantees fast termination when the proposed value is unanimous. As the core part of Avalanche, it guarantees liveness only for \emph{virtuous} transactions, i.e.\ those issued by correct clients and thus guaranteed not to conflict with other transactions.}

Overall, the main contribution of this paper is to introduce a brand new family
of consensus protocols, based on randomized sampling and metastable decision.
The next section provides the model, goals and necessary assumptions for the new protocols.
Section~\ref{sec:protocol} gives intuition behind the protocols, followed by their full specification,
%Section~\ref{sec:attack} briefly discusses typical attack vectors and their possible defense,
Section~\ref{sec:analysis} provides methodology used
by our formal analysis of safety and liveness in Appendix~\ref{sec:full-analysis},
%Section~\ref{sec:implementation} describes a Bitcoin-like payment system,  Section~\ref{sec:evaluation} evaluates the system,
Section~\ref{sec:implementation} describes Avalanche, a Bitcoin-like payment system,
Section~\ref{sec:evaluation} evaluates Avalanche,
Section~\ref{sec:related-work} presents related work, and finally, Section~\ref{sec:conclusions} summarizes our contributions.

\section{Model and Goals}
\label{sec:model_and_goals}
% \editchange{Because our Snowball family provides strong safety guarantee and liveness guarantee},
% it is possible to build other applications involving large-scale probabilistic consensus. We focus on a cryptocurrency application because of many challenges it poses.
%Nevertheless, Snowball does provide liveness guarantee for conflicting values, so it is possible to extend this single-decree protocol for other possible applications involving large-scale probabilistic consensus.
%\Jon{Are there other applications where this consensus model would be valuable? It's not a BFT RSM, but it would be very intriguing if there were other additional applications.}

\tronly{\paragraph{Key Guarantees}}{}

% XXXKEVIN NOT CLEAR THIS PARAGRAPH BELOW BELONGS IN THIS SECTION
% \editinsert{In Avalanche}, we adopt what is commonly known as
% Bitcoin's unspent transaction output (UTXO) model.
% In this model,
% \emph{clients} are authenticated and issue cryptographically signed
% transactions that fully consume an existing UTXO and issue new UTXOs.
% Unlike nodes, clients do not participate in the decision process, but only
% supply transactions to the \emph{nodes} running the consensus protocol.
% Two transactions \emph{conflict} if
% they consume the same UTXO and yield different outputs.
% Correct clients never issue conflicting transactions, nor is it possible
% for Byzantine clients to forge conflicts with transactions
% issued by correct clients.
% However, Byzantine clients can issue multiple transactions that conflict
% with one another, and correct clients can only consume one of
% those transactions.
% \editchange{To make use of Snowball consensus in Avalanche}, then, is to \emph{accept} a
% set of non-conflicting transactions in the presence of Byzantine behavior.
% Each client can be considered a replicated state machine whose
% transitions are defined by a totally ordered list of accepted transactions.

\paragraph{Safety} Unlike classical consensus protocols, and similar to longest-chain-based consensus protocols such as Nakamoto consensus~\cite{nakamoto2008bitcoin}, we adopt an $\varepsilon$-safety guarantee that is probabilistic. 
In practice, this probabilistic guarantee is as strong as traditional safety guarantees, since appropriately small choices of $\varepsilon$ can render consensus failure negligible, lower than the probability of hardware failure due to random events.
Figure~\ref{fig:fandepsilon} shows how the portion ($f/n$) of misbehaving participants (or computation power) affects the probability of system safety failure (decision of two conflicting proposals), given a choice of finality.
% The maximum tolerated Byzantine presence $f$, is dependent on the choice of $\varepsilon$. Similar to Nakamoto consensus, given an observed Byzantine presence $f'$, where $f' \leq f$, the safety failure probability of our protocols maintains the property that $\varepsilon' \leq \varepsilon$. On the other hand, for classical consensus protocols, the failure probability is simply zero for all $f' \leq f$, and almost sure (i.e. one) for $f' > f$.

\begin{figure}[h]
    \includegraphics[width=\linewidth]{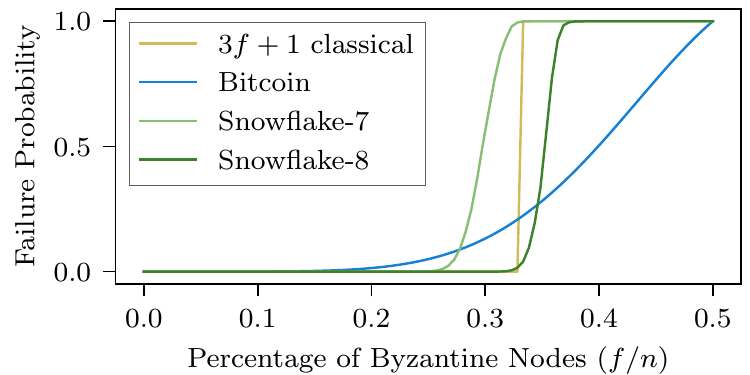}
    \caption{Classical BFT protocols that tolerate $f$ failures will encounter total safety failure when the threshold is exceeded even by one additional node. The Bitcoin curve shows a typical finality choice for Bitcoin where a block is considered final when it is ``buried'' in a branch having 6 additional blocks compared to any other competing forks. Snowflake belongs to the Snow family, and it is configured with $k=10$, $\beta=150$. Snowflake-7,8 uses $\alpha=7$ and $\alpha=8$ respectively (see Section~\ref{sec:protocol} for the definition of $k$, $\alpha$ and $\beta$.}
    \label{fig:fandepsilon}
\end{figure}

\tronly{%
\begin{figure}[h]
    \includegraphics[width=\linewidth]{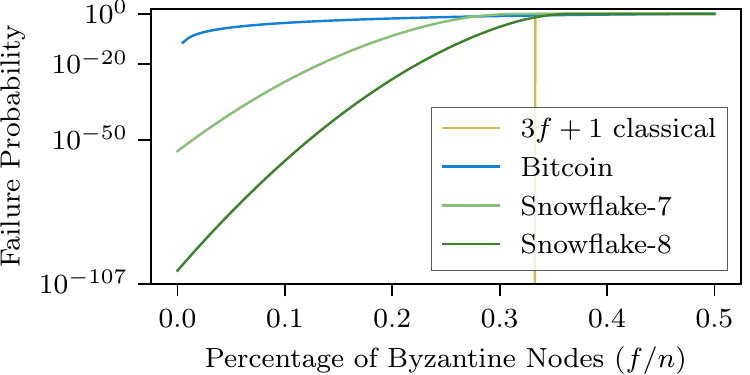}
    \caption{Figure~\ref{fig:fandepsilon} with log-scaled y-axis.}
    \label{fig:fandepsilonlog}
\end{figure}
}{}

\paragraph{Liveness} All our protocols provide a non-zero probability guarantee of termination within a bounded amount of time. 
This bounded guarantee is similar to various protocols such as Ben-Or~\cite{ben1983another} and longest-chain protocols.
In particular, for Nakamoto consensus, the number of required blocks for a transaction increases exponentially with the number of adversarial nodes, with an asymptote at $f = n/2$ wherein the number is infinite.
In other words, the time required for finality approaches $\infty$ as $f$ approaches $n/2$\tronly{ (Figure~\ref{fig:livenessproperties}).}{.}
Furthermore, the required number of rounds is calculable ahead of time, as to allow the system designer to tune liveness at the expense of safety. Lastly, unlike traditional consensus protocols and similar to Nakamoto, our protocols benefit from lower adversarial presence, as discussed in property P3 below.

\tronly{%
\begin{figure}[h!]
    \includegraphics[width=\linewidth]{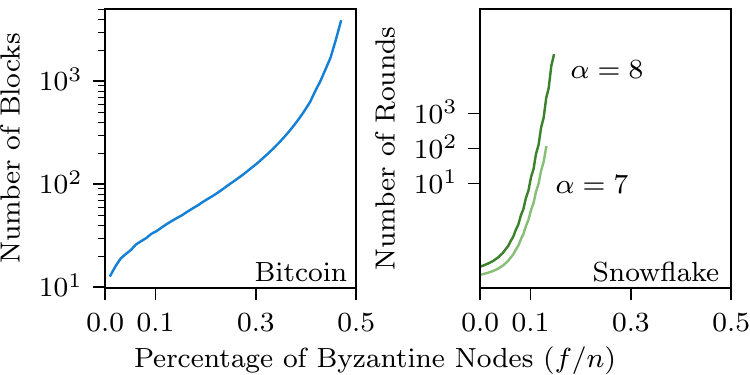}
    \caption{The relation between $f/n$ and the convergence speed, given $\varepsilon = 10^{-20}$. The left figure shows the expected number of blocks to guarantee $\varepsilon$ in Bitcoin, which, counter to commonly accepted folk wisdom, is not a constant $6$, but depends on adversary size to withhold the same $\varepsilon$. The right figure shows the maximum number of rounds required by Snowflake, where being different from Bitcoin, the asymptote is below $0.5$ and varies by the choice of parameters.}
    \label{fig:livenessproperties}
\end{figure}
}

\emph{Formal Guarantees}: Let the system be parameterized for an $\varepsilon$ safety failure probability under a maximum expected $f$ number of adversarial nodes. Let $\Oh{\log n} < t_{max} < \infty$ be the upper bound of the execution of the protocols. The Snow protocols then provide the following guarantees:
\begin{compactitem}
% RVR: this property is unnecessary because the liveness property already ensures non-triviality
%    \item \textbf{Non-triviality}: To ensure that the protocol carries out real work, decisions must include values proposed by external clients;

    \item \textbf{P1. Safety.} When decisions are made by any two correct nodes, they decide on conflicting transactions with negligible probability ($\leq \varepsilon$).
    % \item \textbf{P2. Liveness.} Transactions will be accepted by every correct node in $\Oh{\log n}$ rounds if either the observed Byzantine presence $f'$, where $f' < f$, is sufficiently small, or the network is initialized in a univalent state.
    \item \textbf{P2. Liveness (Upper Bound).} Snow protocols terminate with a strictly positive probability within $t_{max}$ rounds.  
    \item \textbf{P3. Liveness (Strong Form).} If $f \leq \Oh{\sqrt{n}}$, then the Snow protocols terminate with high probability ($\geq 1 - \varepsilon$) in $\Oh{\log{n}}$ rounds. 
\end{compactitem}
% Instead of unconditional agreement, our approach guarantees that safety will be upheld with probability $1-\varepsilon$, where the choice of the security parameter $\varepsilon$ is under the control of the system designer and applications.

\paragraph{Network} 
In the standard definition of asynchrony~\cite{ben1983another}, message transmission time is finite, but the distribution is unspecified (and thus the delivery time can be unbounded for some messages). This implies that the scheduling of message transmission itself could behave arbitrarily, and potentially even maliciously (with full asynchrony).
We use a modified version of this model, which is well-accepted~\cite{banerjee2014epidemic, ganesh2005effect, draief2006thresholds, keeling2011modeling, liggett1997stochastic} in the analysis of epidemic networks and gossip-based stochastic systems. In particular, we fix the distribution of message delay to that of the exponential distribution.
We note that, just like in the standard asynchronous model, there is a strictly non-zero probability that any correct node may execute its next local round only after an arbitrarily large amount of time has passed.
Furthermore, we also note that scheduling only applies to correct nodes, and the adversary may execute arbitrarily, as discussed later. 

\paragraph{Achieving Liveness}
Classical consensus that works with asynchrony does not get stuck in a single phase of voting because the vote initiator always polls votes from all known participants and waits for $n - f$ responses.
%The consequence of these two assumptions is therefore that progress cannot be indefinitely delayed by the adversary since the correct nodes (whose number is sufficient for voting) will eventually deliver the required messages.
In our system, however, nodes operate via subsampling, hence it is possible for a single sample to select a majority of adversarial nodes, and therefore the node gets stuck waiting for the responses. To ensure liveness, a node should be able to wait with some timeout. Therefore, our protocols are synchronous in order to guarantee liveness. Lastly, it is worth noting that Nakamoto consensus is synchronous, in which the required difficulty of proof-of-work is dependent on the maximum network delay~\cite{PassSS17}. %Also, other protocols that rely on gossip [Citation here] are also essentially synchronous due to the synchronous nature of gossip protocols. \editinsert{RvR pointed this out, we need to carefully word it though.

\paragraph{Adversary}
The adversarial nodes execute under their own internal scheduler, which is unbounded in speed, meaning that all adversarial nodes can execute at any infinitesimally small point in time, unlike correct nodes. 
The adversary can view the state of every honest node at all times and can instantly modify the state of all adversarial nodes. 
It cannot, however, schedule or modify communication between correct nodes.
Finally, we make zero assumptions about the behavior of the adversary, meaning that it can choose any execution strategy of its liking.
In short, the adversary is computationally bounded (it cannot forge digital signatures) but otherwise is point-to-point informationally unbounded (knows all state) and round-adaptive (can modify its strategy at any time). 

\paragraph{Sybil Attacks}
Consensus protocols provide their guarantees based on assumptions that only a fraction of participants are adversarial.
These bounds could be violated if the network is naively left open to arbitrary participants.
In particular, a Sybil attack~\cite{douceur2002sybil}, wherein a large number of identities are generated by an adversary, could be used to exceed the bounds.

A long line of work, including PBFT~\cite{castro1999practical}, treats the Sybil problem separately from consensus, and rightfully so, as Sybil control mechanisms are distinct from the underlying, more complex agreement protocol\footnote{This is not to imply that every consensus protocol can be coupled/decoupled with every Sybil control mechanism.}.
In fact, to our knowledge, only Nakamoto-style consensus has ``baked-in'' Sybil prevention as part of its consensus, made possible by chained proof-of-work~\cite{aspnes2005exposing}, which requires miners to continuously stake a hardware investment.
Other protocols, discussed in Section~\ref{sec:related-work}, rely on proof-of-stake (by economic argument) or proof-of-authority (by administrative argument that makes the system ``permissioned'').
The consensus protocols presented in this paper can adopt any
Sybil control mechanism, although proof-of-stake is most aligned with their quiescent operation.
One can use an already established proof-of-stake based mechanism~\cite{GiladHMVZ17}.
The full deployment of an autonomous P2P payment system incorporating staking mechanism is beyond the scope of this paper, whose focus is on a novel design paradigm of the core consensus algorithm.

% Sybil control mechanisms are orthogonal and separate from the consensus protocols. All such protocols, excluding Nakamoto-style consensus protocols, require a mechanism for network identity establishment. In a real implementation of Avalanche, the Sybil problem would be solved through a staking mechanism. 

\paragraph{Flooding Attacks}
Flooding/spam attacks are a problem for any distributed system. 
Without a protection mechanism, an attacker can generate large numbers of transactions and flood protocol data structures, consuming storage.
% Although there is no proper, protocol-level methodology to prevent such an attack, there are very effective heuristics that we can employ.
There are a multitude of techniques to deter such attacks, including network-layer protection, proof-of-authority, local proof-of-work and economic mechanisms. 
In Avalanche, we use transaction fees, making such attacks costly even if the attacker is sending money back to addresses under its control.

\paragraph{Additional Assumptions}
We do not assume that all members of the network are known to all participants, but rather may temporarily have some discrepancies in network view.
We quantify the bounds on the discrepancy in Appendix~\ref{sec:full-analysis-churn}.
We assume a safe bootstrapping mechanism, similar to that of Bitcoin, that enables a node to connect with sufficiently many correct nodes to acquire a statistically unbiased view of the network.
We do not assume a PKI\@.
Finally, we make standard cryptographic assumptions related to digital signatures and hash functions.
%\editinsert{TODO: Talk about synchrony assumption, and justify it by mentioning Bitcoin.}
%\editinsert{TODO: Talk about we don't address Sybil attack, and justify it by mentioning Algorand (we also use Algorand as a comparison in Evaluation, so it will totally make sense here.}

\section{Protocol Design}\label{sec:protocol}
We start with a non-BFT protocol called Slush and progressively build up to Snowflake and Snowball, all based on the same common majority-based metastable voting mechanism.
These protocols are single-decree consensus protocols of increasing robustness.
We provide full specifications for the protocols in this section, and defer the analysis to the next section, and present formal proofs in the appendix.

\subsection{Slush: Introducing Metastability}
The core of our approach is a single-decree consensus protocol,
inspired by epidemic or gossip protocols.
% In this approach, a state in which all possible outcomes are equally
% probable is metastable and disturbances are likely to cause the
% protocol to converge to one of multiple stable states.
% We will call such consensus protocols metastable themselves.
%We can model a preference for a decision as a kind of infection: a node catches a disease, and converts other nodes to carry the same disease in epidemic fashion.
%If a threshold of its interactions are with other infected hosts, it flips its color to the competing agent and joins their ranks.
%
The simplest protocol, Slush, is the foundation of this family, shown in Figure~\ref{fig:slush-loop}.
Slush is \emph{not} tolerant to Byzantine faults, only crash-faults (CFT), but serves as an illustration for the BFT protocols that follow.
For ease of exposition, we will describe the operation of Slush using a decision between two conflicting colors, red and blue.
% , though the technique generalizes to any number of competing proposed values.
% Slush operates as follows.

In Slush, a node starts out initially in an uncolored state.
Upon receiving a transaction from a client, an uncolored node updates its own color to the one carried in the transaction and initiates a query.
To perform a query, a node picks a small, constant sized ($k$) sample of the network uniformly at random, and sends a query message.
Upon receiving a query, an uncolored node adopts the color in the query, responds with that color, and initiates its own query, whereas a colored node simply responds with its current color.
% If $k$ responses are not received within a time bound, the node picks an additional sample from the remaining nodes uniformly at random and queries them until it collects all responses.
Once the querying node collects $k$ responses, it checks if a fraction $\ge\alpha$ are for the same color, where $\alpha > \floor{k/2}$ is a protocol parameter.
If the $\alpha$ threshold is met and the sampled color differs from the node's own color, the node flips to that color.
It then goes back to the query step, and initiates a subsequent round of query, for a total of $m$ rounds.
Finally, the node decides the color it ended up with at time $m$.

\algnewcommand{\IIf}[1]{\State\algorithmicif\ #1\ \algorithmicthen}
\algnewcommand{\EndIIf}{\unskip\ \algorithmicend\ \algorithmicif}

\newcommand{\codecolor}{\mathit{col}}
\newcommand{\codecount}{\mathit{cnt}}
\newcommand{\codelastcol}{\mathit{lastcol}}
\newcommand{\assign}{\coloneqq}
\begin{figure}
    \small
\begin{algorithmic}[1]
    \Procedure{onQuery}{$v, \codecolor'$}
        \IIf{$\codecolor = \bot$} $\codecolor \assign \codecolor'$
        \State\Call{respond}{$v, \codecolor$}
    \EndProcedure
    \Procedure{slushLoop}{$u, \codecolor_0 \in \{\texttt{R}, \texttt{B}, \bot\}$}
        \State $\codecolor \assign \codecolor_0$ \textrm{// initialize with a color}
        \For{$r \in \{1\ldots m\}$}
            \State \textrm{// if $\bot$, skip until \textsc{onQuery} sets the color}
            \If{$\codecolor = \bot$}
                \Continue
            \EndIf
            \State \textrm{// randomly sample from the known nodes}
            \State $\mathcal{K} \assign \Call{sample}{\mathcal{N}\backslash u, k}$
            \State $P \assign \texttt{[}\Call{query}{v, \codecolor}\quad\textbf{for}\ v \in \mathcal{K}\texttt{]}$
            \For{$\codecolor' \in \{\texttt{R}, \texttt{B}\}$}
                \If{$P.\Call{count}{\codecolor'} \ge \alpha$}
                    \State $\codecolor \assign \codecolor'$
                \EndIf
            \EndFor
        \EndFor
        \State \Call{accept}{$\codecolor$}
    \EndProcedure
    \captionof{figure}{Slush protocol. Timeouts elided for readability.}\label{fig:slush-loop}
\end{algorithmic}
\end{figure}
 
Slush has a few properties of interest. 
First, it is almost \emph{memoryless}: a node retains no state between rounds other than its current color, and in particular maintains no history of interactions with other peers.
Second, unlike traditional consensus protocols that query every participant, every round involves sampling just a small, constant-sized slice of the network at random.
Third, Slush makes progress under any network configuration (even fully bivalent state, i.e. 50/50 split between colors), since random perturbations in sampling will cause one color to gain a slight edge and repeated samplings afterwards will build upon and amplify that imbalance.
Finally, if $m$ is chosen high enough, Slush ensures that all nodes will be colored identically with high probability (whp).
Each node has a constant, predictable communication overhead per round, and $m$ grows logarithmically with $n$.

\tronly{
The Slush protocol does not provide a strong safety guarantee in the presence of Byzantine nodes.
In particular, if the correct nodes develop a preference for one color, a Byzantine adversary can attempt to flip nodes
to the opposite so as to keep the network in balance, preventing a decision.
We address this in our first BFT protocol that introduces more state storage at the nodes.
}{
We next examine how to extend Slush to tolerate Byzantine behavior.
}

\subsection{Snowflake: BFT}

Snowflake augments Slush with a single counter that captures the strength of a node's conviction in its current color.
This per-node counter stores how many consecutive samples of the network by that node have all yielded the same color.
A node accepts the current color when its counter reaches $\beta$, another security parameter.
Figure~\ref{fig:snowflake-loop} shows the amended protocol, which includes
the following modifications:

\begin{compactenum}
	\item Each node maintains a counter $\mathit{cnt}$;
    \item Upon every color change, the node resets $\mathit{cnt}$ to 0;
    \item Upon every successful query that yields $\ge \alpha$ responses for the same color as the node, the node increments $\mathit{cnt}$.
\end{compactenum}

\newcommand{\codemaj}{\mathit{maj}}
\newcommand{\codefalse}{\texttt{false}}
\newcommand{\codetrue}{\texttt{true}}
\begin{figure}
    \small
\begin{algorithmic}[1]
    \Procedure{snowflakeLoop}{$u, \codecolor_0 \in \{\texttt{R}, \texttt{B}, \bot\}$}
        \State $\codecolor \assign \codecolor_0$, $\codecount \assign 0$
        \While{\textrm{undecided}}
            \If{$\codecolor = \bot$}
                \Continue
            \EndIf
            \State $\mathcal{K} \assign \Call{sample}{\mathcal{N}\backslash u, k}$
            \State $P \assign \texttt{[}\Call{query}{v, \codecolor}\quad\textbf{for}\ v \in \mathcal{K}\texttt{]}$
            \State $\codemaj \assign \codefalse$
            \For{$\codecolor' \in \{\texttt{R}, \texttt{B}\}$}
            \If{$P.\Call{count}{\codecolor'} \ge \alpha$}
                \State $\codemaj \assign \codetrue$
                \If{$\codecolor' \neq \codecolor$}
                    \State $\codecolor \assign \codecolor'$, $\codecount \assign 1$
                \Else\hspace{1ex}$\codecount\texttt{++}$
                \EndIf
                \IIf {$\codecount \ge \beta$} \Call{accept}{$\codecolor'$}
            \EndIf
            \EndFor
            \IIf{$\codemaj = \codefalse$} $\codecount \assign 0$
        \EndWhile
    \EndProcedure
    \captionof{figure}{Snowflake.}\label{fig:snowflake-loop}
\end{algorithmic}
\end{figure}

\begin{figure}[t]
    \small
\begin{algorithmic}[1]
    \Procedure{snowballLoop}{$u, \codecolor_0 \in \{\texttt{R}, \texttt{B}, \bot\}$}
        \State $\codecolor \assign \codecolor_0$, $\codelastcol \assign \codecolor_0$, $\codecount \assign 0$
        \State $d[\texttt{R}] \assign 0$, $d[\texttt{B}] \assign 0$
        \While{\textrm{undecided}}
            \IIf{$\codecolor = \bot$} \Continue
            \State $\mathcal{K} \assign \Call{sample}{\mathcal{N}\backslash u, k}$
            \State $P \assign \texttt{[}\Call{query}{v, \codecolor}\quad\textbf{for}\ v \in \mathcal{K}\texttt{]}$
            \State $\codemaj \assign \codefalse$
            \For{$\codecolor' \in \{\texttt{R}, \texttt{B}\}$}
            \If{$P.\Call{count}{\codecolor'} \ge \alpha$}
                \State $\codemaj \assign \codetrue$
                \State $d[\codecolor']\texttt{++}$
                \If {$d[\codecolor'] > d[\codecolor]$}
                        \State$\codecolor \assign \codecolor'$
                \EndIf
                \If{$\codecolor' \neq \codelastcol$}
                    \State$\codelastcol \assign \codecolor'$, $\codecount \assign 1$
                \Else\hspace{1ex}$\codecount\texttt{++}$
                \EndIf
                \IIf {$\codecount \ge \beta$} \Call{accept}{$\codecolor'$}
            \EndIf
            \EndFor
            \IIf{$\codemaj = \codefalse$} $\codecount \assign 0$
        \EndWhile
    \EndProcedure
    \captionof{figure}{Snowball.}\label{fig:snowball-loop}
\end{algorithmic}
\end{figure}

When the protocol is correctly parameterized for a given threshold of Byzantine nodes and a desired $\varepsilon$-guarantee, it can ensure both safety (P1) and liveness (P2, P3).
As we later show, there exists an irreversible state after which a decision is inevitable. Correct nodes begin to commit past the irreversible state to adopt the same color, whp. For additional intuition, which we do not expand in this paper, there also exists a phase-shift point, where the Byzantine nodes lose ability to keep network in a bivalent state.
%there exists a phase-shift point after which correct nodes are more likely to tend towards a decision than a bivalent state.
%Further, there exists a point-of-no-return after which a decision is inevitable.
%The Byzantine nodes lose control past the phase shift, and the correct nodes begin to commit past the point-of-no-return, to adopt the same color, whp.

\subsection{Snowball: Adding Confidence}

Snowflake's notion of state is ephemeral: the counter gets reset with every color flip.
% While, theoretically, the protocol is able to make strong guarantees with
% minimal state, we will now improve the protocol to make it harder to
% attack by incorporating a more permanent notion of belief.
Snowball augments Snowflake with \emph{confidence counters} that capture the number of queries that have yielded a threshold result for their corresponding color (Figure ~\ref{fig:snowball-loop}).
A node decides if it gets $\beta$ consecutive chits for a color. However, it only changes preference based on the total accrued confidence.
The differences between Snowflake and Snowball are as follows:
\begin{compactenum}
    \item Upon every successful query, the node increments its confidence counter for that color.
    \item A node switches colors when the confidence in its current color becomes lower than the confidence value of the new color.
\end{compactenum}
% Snowball is not only harder to attack than Snowflake, but is more easily generalized to multi-decree protocols.

\section{Analysis}
\label{sec:analysis}
Due to space limits, we move some core details to Appendix~\ref{sec:full-analysis}, where we show that under certain independent and distinct assumptions, the Snow family of consensus protocols provide safety (P1) and liveness (P2, P3) properties.
In this section, we summarize our core results and provide some proof sketches.

\paragraph{Notation} Let the network consist of a set of $n$ nodes (represented by set $\mathcal{N}$), where $c$ are correct nodes (represented by set $\mathcal{C}$) and $f$ are Byzantine nodes (represented by set $\mathcal{B}$). 
Let $u, v \in \mathcal{C}$ refer to any two correct nodes in the network. Let $k, \alpha, \beta \in \mathbb{Z}^+$ be positive integers where $\alpha > \floor{k/2}$. From now on, $k$ will always refer to the network sample size, where $k \leq n$, and $\alpha$ will be the majority threshold required to consider the voting experiment a ``success''. In general, we will refer to $\mathcal{S}$ as the state (or configuration) of the network at any given time. 

\paragraph{Modelling Framework} To formally model our protocols, we use continuous-time Markov processes (CTMC). 
The state space is enumerable (and finite), and state transitions occur in continuous time. 
CTMCs naturally model our protocols since state transitions do not occur in epochs and in lockstep for every node (at the end of every time unit) but rather occur at any time and independently of each other. 
% CMTCs can also trivially encapsulate synchronous time protocols through the underlying jump chain, whose probabilities of transitions can be extracted from the underlying generator. 
% This modelling framework, in fact, is the primary way by which stochastic processes that describe systems of interacting particles (epidemic networks, biology, physics, etc) are analyzed. 
% Analyzing the protocols introduced in this paper
%differ from classical as well as Nakamoto consensus-style protocols. 
%Even simple, non-linear dynamical systems often incur complicated analyses as they may not admit analytical solutions.
% In particular, since the state of a node depends on randomly sampling other nodes, possibly inducing nondeterministic dependency cycles, analyzing all possible state executions is computationally infeasible. 
% Here, we employ stochastic techniques used in the analysis of epidemic networks, which may also be of independent interest to the distributed systems community. 
%Here, we employ 
% requires complex stochastic techniques to reason about the state of the network in aggregate. 

We focus on binary consensus, although the safety results generalize to more than two values. We can think of the network as a set of nodes either colored red or blue, and we will refer to this configuration at time $t$ as $\mathcal{S}_t$. 
We model our protocols through a continuous-time process with two absorbing states, where either all nodes are red or all nodes are blue. 
The state space $\mathcal{S}$ of the stochastic process is a condensed version of the full configuration space, where each state $\{0, \dots, n\}$ represents the total number of blue nodes in the system. 
% At any time, a set of nodes either change their color, or increase their confidence in their latest color. 
% The dynamics of the system resemble those of epidemic networks, where nodes update their state based on the state of their neighbors, using a function chosen over some probability distribution. 
% It would be infeasible to keep track of all possible execution paths, since at every round, the possible number of branching executions is equal to the number of possible correct nodes times all possible $k$-sample choices of outcomes that the node may sample. Furthermore, the difficulty of this task is greatly amplified due to the Byzantine adversary, whose optimal attack strategy is chosen over an unknown function, possibly itself nondeterministic. 

The simplification that allows us to analyze this system is to obviate the need to keep track of all of the execution paths, as well as all possible adversarial strategies, and rather focus entirely on a single state of interest, without regards to how we achieve this state. 
More specifically, the core extractable insight of our analysis is in identifying the \textit{irreversibility} state of the system, the state upon which so many correct nodes have usurped either red or blue that reverting back to the minority color is highly unlikely. 

\subsection{Safety}

\paragraph{Slush} 
We assume that
%$\mathcal{L}(u) = \mathcal{N}$ for all $u \in \mathcal{N}$.
all nodes share the same $\mathcal{N}$, and in
Appendix~\ref{sec:full-analysis-churn}, we sketch how to relax the requirement of the membership knowledge.
We model the dynamics of the system through a continuous-time process where two states are absorbing, namely the all-red or all-blue state. Let $\{X_{t \geq 0}\}$ be the random variable that describes the state of the system at time $t$, where $X_0 \in \{0, \dots, c\}$.
We begin by immediately discussing the most important result of the safety dynamics of our processes: the \emph{reversibility} probabilities of the \textbf{Slush} process. All the other formal results in this paper are, informally speaking, intuitive derivations and augmentations of this result. 

\begin{theorem}
Let the configuration of the system at time $t$ be $\mathcal{S}_t = n/2 + \delta$, meaning that the network has drifted to $2\delta$ more blue nodes than red nodes ($\delta = 0$ means that red and blue are equal). Let $\xi_\delta$ be the probability of absorption to the all-red state (minority). Then, for all $0 \leq \delta \leq n/2$, we have 
\begin{equation}
\begin{split}
    \xi_\delta &\leq \left(\dfrac{1/2 - \delta/n}{\alpha/k}\right)^{\alpha}\left(\dfrac{1/2 + \delta/n}{1- \alpha/k}\right)^{k-\alpha}\\
    &\leq e^{-2((\alpha/k) - (1/2) + (\delta/n))^2 k}
\end{split}
\end{equation}
\end{theorem}

\begin{proof}
This bound follows from the Hoeffding-derived tail bounds of the hypergeometric distribution by Chvatal~\cite{chvatal1979tail}. 
\end{proof}

We note that Chvatal's bounds are introduced for simplicity of exposition and are extremely weak.  
We leave the full closed-form expression in Theorem~\ref{theorem:slush_prob_convergence_minority} to the appendix, which is also significantly stronger than the Chvatal bound. 
Nonetheless, using the loose Chvatal bound, we make the key observation that as the drift $\delta$ increases, given fixed $\alpha$ and $k$, the probability of moving towards the minority value decreases \emph{exponentially fast} (in fact, even faster, since there is a quadratic term in the inverse exponent). Additionally, the same result holds for increasing $\alpha$ given a fixed $k$. 

The outcomes of this theorem demonstrate a key property: once the network loses full bivalency (i.e. $\delta > 0$), it tends to topple and converge rapidly towards the majority color, unable to revert back to the minority with significant probability. This is the fundamental property exploited by our protocols, and what makes them secure despite only sampling a small, constant-sized set of the network. The core result that follows for the safety guarantees in Snowflake is in finding regions (given specific parameter choices) where the reversibility holds with no higher than $\varepsilon$ probability even under adversarial presence. 

\paragraph{Snowflake} 
For Snowflake, we assume that some fraction of nodes are adversarial. In Slush, once the network gains significant majority support for one proposal (e.g., the color blue), it becomes unlikely for a minority proposal (e.g., the color red) to ever become decided in the future (irreversibility). Furthermore, in Slush nodes simply have to execute the protocol for a deterministic number of rounds, $m$, which is known ahead of protocol execution. When introducing adversarial nodes with arbitrary strategies, however, nodes cannot simply execute the protocol for a deterministic number of rounds, since the adversary may nondeterministically affect the value of $m$. Instead, correct nodes must implement a mechanism to explicitly detect that irreversibility has been reached. To that end, in Snowflake, every correct node implements a decision function, $\mathcal{D}(u, \mathcal{S}_t, blue) \rightarrow \{0, 1\}$, which is a random variable that outputs $1$ if node $u$ detects that the network has reached an irreversibility state at time $t$ for blue. The decision mechanism is probabilistic, meaning that it can fail, although it is designed to do so with negligible probability. We now sketch the proof of Snowflake.

\noindent \emph{Proof Sketch}. We define safety failure to be the event wherein any two correct nodes $u$ and $v$ decide on blue and red, i.e. $\mathcal{D}(u, \mathcal{S}_t, blue) \rightarrow 1$ and $\mathcal{D}(v, \mathcal{S}_{t'}, red) \rightarrow 1$, for any two times $t$ and $t'$. We again model the system as a continuous time random process. The state space is defined the same way as in Slush. However, we note some critical subtleties. First, even if all correct nodes accept a color, the Byzantine nodes may revert. Second, we also have to consider the decision mechanism $\mathcal{D}(*)$. To analyze, we obviate the need to keep track of all network configurations under all adversarial strategies and assume that a node $u$ first decides on blue. Then, conditioned on the state of the network upon $u$ deciding, we calculate the probability that another node $v$ decides red, which is a function of both the probability that the network reverts towards a minority blue state and that $v$ decides at that state. 
We show that under appropriate choices of $k$, $\alpha$, and $\beta$, we can construct highly secure instances of Snowflake (i.e. safety failure with probability $\leq \varepsilon$) when the network reaches some bias of $\delta$, as shown in Figure~\ref{fig:states_feasible_solutions}. A concrete example is provided in Figure~\ref{fig:fandepsilon}.

\begin{figure}[h]
\begin{center}
\usetikzlibrary{decorations.pathreplacing}
\begin{tikzpicture}[x=1.12cm, scale=0.8, every node/.append style={transform shape}]
    \draw[line width=0.2ex, line cap=round] (0,0) -- (8,0); %edit here for the axis
    \foreach \x in {0,4,6.5,8} % edit here for the vertical lines
    \draw[line width=0.2ex, line cap=round, shift={(\x,0)},color=black] (0pt,3pt) -- (0pt,-3pt);
    \foreach \x/\y/\z in {%
        0/$0$/a,
        4/${c/2}$/b,
        6.5/$$/c,
        8/$c$/d}
    \draw[line width=0.2ex, shift={(\x,0)},color=black] (0pt,0pt) -- (0pt,-3pt) node[below] (\z) {\y};

    \begin{scope}[line width=0.15ex]
        \path[->] (c) edge[out=-90, in=-90, looseness=0.4] node[above] {$\le \varepsilon$} (a);
    \end{scope}
    \draw [thick,decoration={brace,raise=1ex, amplitude=2ex},decorate] (4, 0) -- (6.5, 0) node[midway,above,yshift=3.5ex] {$\delta$};
\end{tikzpicture}
\caption{Representation of the irreversibility state, which exists when -- even under $f$ Byzantine nodes -- the number of blue correct nodes exceeds that of red correct nodes by more than $2\delta$.
}
\label{fig:states_feasible_solutions}
\end{center}
\end{figure}
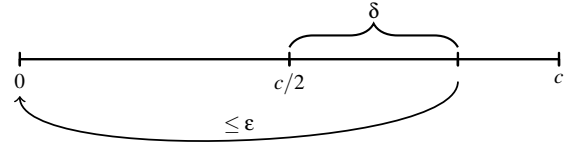    

% An important insight is that there exists an irreversible state, or \textit{point of no return}, after which the system will converge to an absorbing state whp. Furthermore a correct node only decides when the system is beyond the point of no return. Composing these two guarantees together, the probability of a safety violation is strictly less than $\varepsilon$, which can be configured as desired. 
% Unsurprisingly, there is an inherent tension between safety and liveness, but suitable parameters can be found that are practical.
% Larger values of $k$ obtain higher levels of security for correct nodes, at the expense of slower convergence. 

\paragraph{Snowball}
Snowball is an improvement in security over Snowflake, where random perturbations in network samples are reduced by introducing a limited form of history, which we refer to as confidence. 
% Modeling Snowball with a Markov chain is difficult because of a state space explosion problem. In particular, it is not sufficient to simply keep track of color preferences of each correct node, the analysis must also maintain information about their confidence. 

\noindent \emph{Proof Sketch}. We apply martingale concentration inequalities to prove that once the system has reached the irreversibility state, then the growth of the confidence of the majority decided color will perpetually grow and drift further away from those of the minority color, effectively rendering reversibility less likely over time. If the drifts ever revert, then reversibility analysis becomes identical to that of Snowflake.  

\subsection{Liveness}

We assume that the observed adversarial presence $0 \leq f' \leq n(k - \alpha - \psi)/k \leq f$, where we refer to $\psi$ as the buffer zone. 
The bigger $\psi$, the quicker the ability of the decision mechanism to finalize a value. 
If, of course, $\psi$ approaches zero or becomes negative, then we violate the upper bound of adversarial tolerance for the parameterized system, and thus the adversary can, with high probability, stall termination by simply choosing to not respond, although the safety guarantees may still hold. 

Assuming that $\psi$ is strictly positive, termination is strictly finite under all network configurations where a proposal has at least $\alpha$ support. Furthermore, not only is termination finite with probability one, we also have a strictly positive probability of termination within any bounded amount of time $t_{max}$, as discussed in Lemma~\ref{lemma:finitetermination}, which follows from Theorem~\ref{theorem:mean-convergence-time}. This captures liveness property P2. 

\noindent\emph{Proof Sketch.} Using the construction of the system to prove irreversibility, we characterize the distribution of the average time spent (sojourn times) at each state before the system terminates execution by absorption at either absorbing state. The termination time is then a union of these times. 

For non-conflicting transactions, since the adversary is unable to forge a conflict, the time to decision is simply the mixing time , which is $\Oh{\log{n}}$.
Liveness guarantees under a fully bivalent network configuration reduce to an optimal convergence time of $\Oh{\log{n}}$ rounds if the adversary is at most $\Oh{\sqrt{n}}$, for $\alpha = \floor{k/2} + 1$. We leave additional detains to Lemma~\ref{lemma:centrallimit}.
% This result is independently supported by Ben-Or~\cite{ben1983another} and Doerr et al~\cite{doerr2011stabilizing}. 
When the adversary surpasses $\Oh{\sqrt{n}}$ nodes, the worst-case number of rounds increases polynomially, and as $f$ approaches $n/2$ it approaches exponential convergence rates.

\noindent\emph{Proof Sketch.} We modify Theorem~\ref{theorem:mean-convergence-time} to include the adversary, which reverts any imbalances in the network by keeping network fully bivalent. 

\paragraph{Multi-Value Consensus}
Our binary consensus protocol could support multi-value consensus by running logarithmic binary instances, one for each bit of the proposed value. However, such theoretical reduction might not be efficient in practice. Instead, we could directly incorporate multi-values as multi-colors in the protocol, where safety analysis could still be generalized.

For liveness, we sketch a leaderless initialization mechanism, which in expectation uses $\Oh{\log{n}}$ rounds under the assumption that the network is synchronized in the Appendix~\ref{sec:sync-heuristic}. 
While the design of initialization mechanisms is interesting, note that it is not necessary for a decentralized payment system, as we show in Section~\ref{sec:implementation}.
%We leave additional research into the initialization mechanisms to future work. 
%However, we finally note that when building a decentralized payment system, an initialization mechanism is not necessary, as we show in the implementation of Avalanche in Section~\ref{sec:implementation}.
Finally, in the Appendix~\ref{sec:full-analysis-churn}, we discuss churn and view discrepancies.

\section{Peer-to-Peer Payment System}
\label{sec:implementation}

%\label{sec:evaluation}

Using Snowball consensus, we have implemented a bare-bones payment system, Avalanche, which supports Bitcoin transactions. In this section, we describe the design and sketch how the implementation can support the value transfer primitive at the center of cryptocurrencies.
Deploying a full cryptocurrency involves bootstrapping, minting, staking, unstaking,
and inflation control. While we have solutions for these issues, their full
discussion is beyond the scope of this paper, whose focus is centered on the
novel Snow consensus protocol family.
%In this section,
%we focus on how Avalanche can support the value transfer primitive at the center of cryptocurrencies.
%\paragraph{XXX: the original implementation paragraph in Model}

%The Snow family is presented in the form of single-decree, binary consensus for its own theoretical value and simplicity.
%On the other hand, we show later in Section~\ref{sec:implementation} our practical payment system, Avalanche, utilizes multiple Snowball instances to support standard Bitcoin-like transactions. %, by humbly cheating in two ways.
%First, instead of a single replicated state machine (RSM) model, where the system determines a sequence of totally-ordered transactions $T_0, T_1, T_2, \ldots$ issued by any client, we adopt a \emph{parallel consensus model}, where each client interacts independently with its own RSMs and optionally transfers ownership of its RSM to another client. The system establishes only a partial order between dependent transactions.
%Second,
In a cryptocurrency setting, cryptographic signatures enforce that only a key owner is able to create a transaction that spends a particular coin. Since correct clients follow the protocol as prescribed and never double spend coins, in Avalanche, they are guaranteed both safety and liveness for their \emph{virtuous} transactions. In contrast, liveness is not guaranteed for \emph{rogue} transactions, submitted by Byzantine clients, which conflict with one another. Such decisions may stall in the network, but have no safety impact on virtuous transactions.
We show that this is a sensible tradeoff, and that the resulting system is sufficient for building complex payment systems.

\subsection{Avalanche: Adding a DAG}%\tronly{}{\vspace{-0.5em}}

Avalanche consists of multiple single-decree Snowball instances instantiated as a multi-decree protocol that
maintains a dynamic, append-only directed acyclic graph (DAG) of all known transactions.
The DAG has a single sink that is the \emph{genesis vertex}.
Maintaining a DAG provides two significant benefits.
First, it improves efficiency, because a single vote on a DAG vertex implicitly votes for all transactions on the path to the genesis vertex.
Second, it also improves security, because the DAG intertwines the fate of transactions, similar to the Bitcoin blockchain.
This renders past decisions difficult to undo without the approval of correct nodes.

% Avalanche's DAG embodies all the transactions that have been proposed to the system, where each transaction is represented as a vertex.
% The DAG is rooted at a protocol-defined, well-known \emph{genesis} vertex.
When a client creates a transaction, it names one or more \emph{parents}, which are included inseparably in the transaction and form the edges of the DAG\@.
The parent-child relationships encoded in the DAG may, but do not need to, correspond to application-specific dependencies; for instance, a child transaction need not spend or have any relationship with the funds received in the parent transaction.
% These additional relationships entangle the fate of previous decisions made by the system.
We use the term \emph{ancestor set} to refer to all transactions reachable via parent edges back in history, and \emph{progeny} to refer to all children transactions and their offspring.

The central challenge in the maintenance of the DAG is to choose among \emph{conflicting transactions}.
The notion of conflict is application-defined and transitive, forming an equivalence relation.
In our cryptocurrency application, transactions that spend the same funds (\emph{double-spends}) conflict, and form a \emph{conflict set}
%(Figure~\tronly{\ref{fig:dag-conflict-set}}{\ref{fig:dag-cd}})
(shaded regions in Figure~\ref{fig:dag-cd}), out of which only a single one can be accepted.
Note that the conflict set of a virtuous transaction is always a singleton.
%Note that the graph structure of transactions that spend depend on each other, also known as the UTXO graph, is completely independent of the DAG that Avalanche maintains. As a result, any two vertices may be in conflict. Figure~\ref{fig:dag-conflicting-set} shows an example.
 
%\tronly{
%\begin{figure}\begin{center}
%    \input{figures/dag-conflict-set.tex}
%    \captionof{figure}{DAG vertices partitioned by conflict sets. At most one vertex in each shaded region will be accepted.}\label{fig:dag-conflict-set}
%\end{center}
%\end{figure}
%}{}

Avalanche instantiates a Snowball instance for each conflict set.
Whereas Snowball uses repeated queries and multiple counters to capture the amount of confidence built in conflicting transactions (colors),
Avalanche takes advantage of the DAG structure and uses a transaction's progeny.
Specifically, when a transaction $T$ is queried, all transactions reachable from $T$ by following the DAG edges are implicitly part of the query.
A node will only respond positively to the query if $T$ and its entire ancestry are currently the preferred option in their respective conflict sets.
If more than a threshold of responders vote positively, the transaction is said to collect a \emph{chit}.
Nodes then compute their \emph{confidence} as the total number of chits in the progeny of that transaction.
They query a transaction just once and rely on new vertices and possible chits, added to the progeny, to build up their confidence.
Ties are broken by an initial preference for first-seen transactions.
Note that chits are decoupled from the DAG structure, making the protocol immune to attacks where
the attacker generates large, padded subgraphs.

\subsection{Avalanche: Specification}%\tronly{}{\vspace{-0.5em}}
\label{subsection:specification}

\newcommand{\codeedges}{\mathit{edges}}
\newcommand{\codedata}{\mathit{data}}
\begin{figure}
\begin{center}
\small
\begin{algorithmic}[1]
    \Procedure{init}{}
        \State $\mathcal{T} \assign \emptyset$ \hspace{1ex}\textrm{// the set of known transactions}
        \State $\mathcal{Q} \assign \emptyset$ \hspace{1ex}\textrm{// the set of queried transactions}
    \EndProcedure
    \Procedure{onGenerateTx}{$\codedata$}
        \State $\codeedges \assign \{T' \gets T: T' \in \Call{parentSelection}{\mathcal{T}}\}$
        \State $T \assign \Call{Tx}{\codedata, \codeedges}$
        \State \Call{onReceiveTx}{$T$}
    \EndProcedure
    \Procedure{onReceiveTx}{$T$}
        \If{$T \notin \mathcal{T}$}
            \If{$\mathcal{P}_T = \emptyset$}
                \State $\mathcal{P}_T \assign \{T\}$, $\mathcal{P}_T\mathit{.pref} \assign T$
                \State $\mathcal{P}_T\mathit{.last} \assign T, \mathcal{P}_T\mathit{.cnt} \assign 0$
            \Else$\ \mathcal{P}_T \assign \mathcal{P}_T \cup \{T\}$
            \EndIf
            \State $\mathcal{T} \assign \mathcal{T} \cup \{T\}$, $c_T \assign 0$.
        \EndIf
    \EndProcedure
    \captionof{figure}{Avalanche: transaction generation.}\label{fig:gossipchain-ongen}
\end{algorithmic}
\end{center}
\end{figure}

\begin{figure}
\begin{center}
\small
\begin{algorithmic}[1]
    \Procedure{avalancheLoop}{}
        \While {\codetrue}
            \State\textrm{find  $T$ that satisfies }%\vspace*{-.6\baselineskip}
            $T \in \mathcal{T} \land T \notin \mathcal{Q}$
            %\begin{align*}\hspace{2ex}
            %    T \in \mathcal{T} &\land T \notin \mathcal{Q} \\
            %            &\land (\forall T', T' \gets T: T' \in \mathcal{Q})
            %\end{align*}
            \State $\mathcal{K} \assign \Call{sample}{\mathcal{N}\backslash u, k}$
            \State $P \assign \sum_{v \in \mathcal{K}}\Call{query}{v, T}$
            \If{$P \ge \alpha$}
                \State $c_T \assign 1$
            \State\textrm{// update the preference for ancestors}
            \For{$T' \in \mathcal{T} : T' \stackrel{*}{\gets} T$}
                \If{$d(T') > d(\mathcal{P}_{T'}\mathit{.pref})$}
                    \State $\mathcal{P}_{T'}\mathit{.pref} \assign T'$
                \EndIf
                \If{$T'\neq \mathcal{P}_{T'}\mathit{.last}$}
                    \State $\mathcal{P}_{T'}\mathit{.last} \assign T'$, $\mathcal{P}_{T'}\mathit{.cnt} \assign 1$
                \Else
                    \State \texttt{++}$\mathcal{P}_{T'}\mathit{.cnt}$
                \EndIf
            \EndFor
            \Else
            \For{$T' \in \mathcal{T} : T' \stackrel{*}{\gets} T$}
                    \State$\mathcal{P}_{T'}\mathit{.cnt} \assign 0$
            \EndFor
            \EndIf
            \State\textrm{// otherwise, }$c_T$\textrm{ remains 0 forever}
            \State $\mathcal{Q} \assign \mathcal{Q} \cup \{T\}$ \hspace {1ex} \textrm{// mark T as queried}
        \EndWhile
    \EndProcedure
    \captionof{figure}{Avalanche: the main loop.}\label{fig:gossipchain-main}
\end{algorithmic}
\end{center}
\end{figure}

% TODO IMPORTANT The isPreferred must imply that T is queried, but current writing can be not queried. Also, do you do max over only queried transactions?
\begin{figure}[t]
\begin{center}
\small
\begin{algorithmic}[1]
    \Function{isPreferred}{$T$}
        %\State \Return $d(T) = \max_{T' \in \mathcal{P}_T} d(T')$
        \State \Return $T = \mathcal{P}_T\mathit{.pref}$
    \EndFunction
    \Function{isStronglyPreferred}{$T$}
        \State \Return $\forall T'\in\mathcal{T}, T' \stackrel{*}{\gets} T: \Call{isPreferred}{T'}$
    \EndFunction
    \Function{isAccepted}{$T$}
        \State\Return
            \vspace*{-.5\baselineskip}
        \begin{align*}
            (&(\forall T' \in \mathcal{T}, T' \gets T: \Call{isAccepted}{T'}) \\
                &\land |\mathcal{P}_T| = 1 \land \mathcal{P}_T\mathit{.cnt} \ge \beta_1) \texttt{\hspace{.1in}// safe early commitment} \\
            %& \left(\frac{d(T)}{d'} > \gamma \land d' \ge \beta_2\right)
            \lor &(%\mathcal{P}_T\textrm{.pref} = \mathcal{P}_T\textrm{.last} \land
            \mathcal{P}_T\mathit{.cnt} \ge \beta_2)\texttt{\hspace{.1in}// consecutive counter}
        \end{align*}
    \EndFunction
    \Procedure{onQuery}{$j, T$}
        \State \Call{onReceiveTx}{$T$}
        \State \Call{respond}{$j, \textsc{isStronglyPreferred}(T)$}
    \EndProcedure
    \captionof{figure}{Avalanche: voting and decision primitives.}\label{fig:gossipchain-onquery}
\end{algorithmic}
\end{center}
\end{figure}

Each correct node $u$
keeps track of all transactions it has learned about in set $\mathcal{T}_u$,
partitioned into mutually exclusive conflict sets $\mathcal{P}_T$, $T \in \mathcal{T}_u$.
% Transactions are determined to be conflicting based on a deterministic function known to every node.
Since conflicts are transitive, if $T_i$ and $T_j$ are conflicting, then they belong to the same conflict set, i.e. $\mathcal{P}_{T_i} = \mathcal{P}_{T_j}$. This relation may sound counter-intuitive: conflicting transitions have the \emph{equivalence} relation, because they are equivocations spending the \emph{same} funds.
%\Jon{By $=$ surely you mean $\neq$?}

We write $T' \gets T$ if $T$ has a parent edge to transaction $T'$,
The ``$\stackrel{*}{\gets}$''-relation is its reflexive transitive closure, indicating a path from $T$ to $T'$.
DAGs built by different nodes are guaranteed to be compatible, though at any one time, the two nodes may not have a complete view of all vertices in the system.
Specifically, if $T' \gets T$, then every node in the system that has $T$ will also have $T'$ and the same relation $T' \gets T$; and conversely, if $T' \cancel{\gets} T$, then no nodes will end up with $T' \gets T$.

Each node $u$ can compute a confidence value, $d_u(T)$, from the progeny as follows:
\[ d_u(T) = \sum_{T' \in \mathcal{T}_u, T \stackrel{*}{\gets} T'}c_{uT'}\]
where $c_{uT'}$ stands for the chit value of $T'$ for node $u$. Each transaction initially has a chit value of $0$ before the node gets
the query results. If the node collects a threshold of $\alpha$ yes-votes after the query, the value $c_{uT'}$ is set to 1, otherwise remains $0$ forever.
Therefore, a chit value reflects the result from the one-time query of its associated transaction and becomes immutable afterwards, while $d(T)$ can increase as the DAG grows by collecting more chits in its progeny.
Because $c_T \in \{0, 1\}$, confidence values are monotonic.
%\Jon{the notation $c_{uT}$ shows up before it's definet. I guess it's the number of chits counted for $T$ by participant $u$?}

In addition, node $u$ maintains its own local list of known nodes $\mathcal{N}_u \subseteq \mathcal{N}$ that comprise the system.
For simplicity, we assume for now $\mathcal{N}_u = \mathcal{N}$, and elide subscript $u$ in contexts without ambiguity.
%
% TODO talk to ted and ask how to rephrase this. "the data structure" is unclear, and so is immutability
% This is because a transaction is immutable in the sense that its identity is uniquely determined by both the application data and parent links it embodies.
%Here, notation ``$T' \gets T$'' means $T'$ is one of $T$'s immediate parents, and
%``$\stackrel{*}{\gets}$''-relation is the reflexive transitive closure of ``$\gets$''-relation; namely,
%if $T' \stackrel{*}{\gets} T$, there exists a path from $T$ to $T'$. Specially, ``$T \stackrel{*}{\gets} T$''.

Each node implements an event-driven state machine, centered around a query that serves both to solicit votes on each transaction and to notify other nodes of the existence of newly discovered transactions.
In particular, when node $u$ discovers a transaction $T$ through a query, it starts a one-time query process by sampling $k$ random peers and sending a message to them, after $T$ is delivered via $\textsc{onReceiveTx}$.
% determining the set of conflicting transactions $\mathcal{P}_T$,

Node $u$ answers a query by checking whether each $T'$ such that $T' \stackrel{*}{\gets} T$ is currently preferred among competing transactions $\forall T'' \in \mathcal{P}_{T'}$.
If every single ancestor $T'$ fulfills this criterion, the transaction is said to be \emph{strongly preferred}, and receives a yes-vote (1). A failure of this criterion at any $T'$ yields a no-vote (0).
When $u$ accumulates $k$ responses, it checks whether there are $\alpha$ yes-votes for $T$, and if so grants the chit (chit value $c_T \assign 1$) for $T$.
The above process will yield a labeling of the DAG with a chit value and associated confidence for each transaction~$T$.

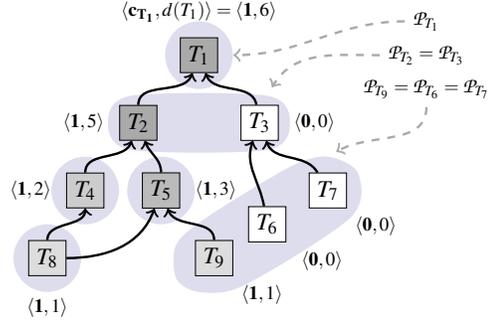
\begin{figure}
\begin{center}
    %\tronly{\input{figures/dag-cd}}{\input{figures/dag-cd-compact}}
    \definecolor{lightgray}{HTML}{dddddd}
\definecolor{medgray}{HTML}{cccccc}
\definecolor{medgray2}{HTML}{bbbbbb}
\definecolor{darkgray}{HTML}{aaaaaa}
\definecolor{lightgp}{HTML}{ddddee}
\begin{tikzpicture}[x=1.12cm, scale=0.9, every node/.append style={transform shape}]
    \begin{scope}[all/.style={draw, minimum height=0.5cm, minimum width=0.5cm},line width=0.08ex]
        \node[fill=lightgp, circle,minimum width=1cm] (pt1) at (0, 0) {};
        \node[fill=lightgp, circle,minimum width=1cm] (pt4) at (-1.5, -2) {};
        \node[fill=lightgp, circle,minimum width=1cm] (pt5) at (-0.5, -2) {};
        \node[fill=lightgp, circle,minimum width=1cm] (pt6) at (-2, -3) {};
        \fill[color=lightgp] plot[smooth cycle] coordinates { (1.2, -1) (0.9, -1.4) (-0.9, -1.4) (-1.2, -1) (-0.9, -0.6) (0.9, -0.6)} node (pt2) {};
        \fill[color=lightgp] plot[smooth cycle] coordinates {(2.1, -1.75) (2, -2.4) (0.6, -3.4) (0, -3.5) (-0.3, -3.3)(-0.3, -2.8) (0.1, -2.4) (1.2, -1.65) (1.7, -1.5) } node (pt3) {};
        \node[all, draw,fill=darkgray] (b0) at (0, 0) {$T_1$};
        \node[all, draw,fill=darkgray] (b1) at (-0.8, -1) {$T_2$};
        \node[all, draw,fill=white] (b2) at (0.8, -1) {$T_3$};
        \node[all, draw,fill=medgray] (b3) at (-1.5, -2) {$T_4$};
        \node[all, draw,fill=medgray2] (b4) at (-0.5, -2) {$T_5$};
        \node[all, draw,fill=white] (b5) at (0.9, -2.5) {$T_6$};
        \node[all, draw,fill=white] (b6) at (1.7, -2) {$T_7$};
        \node[all, draw,fill=lightgray] (b7) at (-2, -3) {$T_8$};
        \node[all, draw,fill=lightgray] (b8) at (0.2, -3) {$T_9$};
        \begin{scope}[line width=0.2ex]
        \path[->] (b1) edge[out=90, in=-110] node[sloped,above] {} (b0) ;
        \path[->] (b2) edge[out=100, in=-70] node[sloped,above] {} (b0) ;
        \path[->] (b3) edge[out=90, in=-110] node[sloped,above] {} (b1) ;
        \path[->] (b4) edge[out=90, in=-70] node[sloped,above] {} (b1) ;
        \path[->] (b5) edge[out=100, in=-110] node[sloped,above] {} (b2) ;
        \path[->] (b6) edge[out=120, in=-70] node[sloped,above] {} (b2) ;
        \path[->] (b7) edge[out=90, in=-90] node[sloped,above] {} (b3) ;
        \path[->] (b7) edge[out=0, in=-110] node[sloped,above] {} (b4) ;
        \path[->] (b8) edge[out=90, in=-70] node[sloped,above] {} (b4) ;
        \end{scope}
        \node[] (p1) at (3, 0.5) {\footnotesize$\mathcal{P}_{T_1}$} ;
        \node[] (p2) at (3, 0) {\footnotesize$\mathcal{P}_{T_2} = \mathcal{P}_{T_3}$};
        \node[] (p3) at (3, -0.5) {\footnotesize$\mathcal{P}_{T_9} = \mathcal{P}_{T_6} = \mathcal{P}_{T_7}$};
        \begin{scope}[color=darkgray,dashed,line width=0.2ex]
        \path[->] (p1) edge[out=180, in=0] node[sloped,above] {} (pt1) ;
        \path[->] (p2) edge[out=180, in=60] node[sloped,above] {} (pt2) ;
        \path[->] (p3) edge[out=-90, in=30] node[sloped,above] {} (pt3) ;
        \end{scope}

        \node[anchor=south, yshift=0.1cm] at (b0.north) {\footnotesize$\langle \mathbf{c_{T_1}}, d(T_1) \rangle = \langle\mathbf{1}, 6\rangle$};
        \node[anchor=east,xshift=-0.1cm] at (b1.west) {\footnotesize$\langle \mathbf{1}, 5 \rangle$};
        \node[anchor=west,xshift=0.1cm] at (b2.east) {\footnotesize$\langle \mathbf{0}, 0 \rangle$};
        \node[anchor=east,xshift=-0.1cm] at (b3.west) {\footnotesize$\langle \mathbf{1}, 2 \rangle$};
        \node[anchor=west,xshift=0.1cm] at (b4.east) {\footnotesize$\langle \mathbf{1}, 3 \rangle$};
        \node[anchor=north west,xshift=0.1cm] at (b5.south east) {\footnotesize$\langle \mathbf{0}, 0 \rangle$};
        \node[anchor=north west] at (b6.south east) {\footnotesize$\langle \mathbf{0}, 0 \rangle$};
        \node[anchor=north west] at (b8.south east) {\footnotesize$\langle \mathbf{1}, 1 \rangle$};
        \node[anchor=north,yshift=-0.2cm] at (b7.south) {\footnotesize$\langle \mathbf{1}, 1 \rangle$};
        
    \end{scope}
\end{tikzpicture}
    \captionof{figure}{Example of $\langle \textrm{chit}, \textrm{confidence}\rangle$ values.  Darker boxes indicate transactions with higher confidence values. At most one transaction in each shaded region will be accepted.}
    \label{fig:dag-cd}
\end{center}
\end{figure}

Figure~\ref{fig:dag-cd} illustrates a sample DAG built by Avalanche.
Similar to Snowball, sampling in Avalanche will create a positive feedback for the preference of a single transaction in its conflict set.
For example, because $T_2$ has larger confidence than $T_3$, its descendants are more likely collect chits in the future compared to $T_3$.

\tronly{Similar to Bitcoin, Avalanche leaves determining the acceptance point of a transaction to the application. An application supplies an \textsc{isAccepted} predicate that can take into account the value at risk in the transaction and the chances of a decision being reverted to determine when to decide.}{}

Committing a transaction can be performed through a \emph{safe early commitment}. For virtuous transactions, $T$ is accepted when it is the only transaction in its conflict set and has a confidence not less than threshold $\beta_1$.
As in Snowball, $T$ can also be accepted after a $\beta_2$ number of consecutive successful queries.
If a virtuous transaction fails to get accepted due to a problem
with parents, it could be accepted if reissued with different parents.
Figure~\ref{fig:gossipchain-ongen} shows how Avalanche entangles transactions. Because transactions that consume and generate the same UTXO do not conflict with each other, any transaction can be reissued with different parents.

\tronly{
Figure~\ref{fig:gossipchain-main} illustrates the protocol main loop
executed by each node.
In each iteration, the node attempts to select a transaction $T$ that has not
yet been queried.  If no such
transaction exists, the loop will stall until a new transaction is
added to $\mathcal{T}$.
It then selects $k$ peers and queries those peers.
If more than $\alpha$ of those peers return a positive response, the chit value is set to~1.
After that, it updates the preferred transaction of each conflict set
of the transactions in its ancestry.
Next, $T$ is added to the set $\mathcal{Q}$
so it will never be queried again by the node.
The code that selects additional peers if some of the $k$ peers are
unresponsive is omitted for simplicity.

Figure~\ref{fig:gossipchain-onquery} shows what happens when a node
receives a query for transaction $T$ from peer $j$.
First it adds $T$ to $\mathcal{T}$, unless it already has it.
Then it determines if $T$ is currently strongly preferred.
If so, the node returns a positive response to peer $j$.
Otherwise, it returns a negative response.
Notice that in the pseudocode, we assume when a node knows $T$, it also
recursively knows the entire ancestry of $T$. This can be achieved by
postponing the delivery of $T$ until its entire ancestry is recursively
fetched.
In practice, an additional gossip process that disseminates
transactions is used in parallel, but is not shown in pseudocode for simplicity.
}{}

\subsection{Multi-Input UTXO Transactions}
In addition to the DAG structure in Avalanche, an \emph{unspent transaction output} (UTXO)~\cite{nakamoto2008bitcoin} graph that captures
spending dependency is used to realize the ledger for the payment system. To
avoid ambiguity, we denote the transactions that encode the data for money
transfer \emph{transactions}, while we call the
transactions ($T \in \mathcal{T}$) in Avalanche's DAG \emph{vertices}.

We inherit the transaction and address mechanisms from Bitcoin. At their simplest, transactions consist of multiple inputs and outputs, with corresponding redeem scripts.
Addresses are identified by the hash of their public keys, and signatures are generated by corresponding private keys.
The full scripting language is used to ensure that a redeem script is authenticated to spend a UTXO\@.
UTXOs are fully consumed by a valid transaction, and may generate new UTXOs spendable by named recipients.
Multi-input transactions consume multiple UTXOs, and in Avalanche, may appear in multiple conflict sets.
To account for these correctly, we represent \emph{transaction-input} pairs (e.g. $\texttt{In}_{a1}$) as Avalanche vertices.
The conflict relation of transaction-input pairs is transitive because of each pair only spends one unspent output.
Then, we use the conjunction of \textsc{isAccepted} for all inputs of a transaction to ensure that no transaction will be accepted unless all its inputs are accepted (Figure~\ref{fig:cash-system-b}). In other words, a transaction is accepted only if all its transaction-input pairs are accepted in their respective Snowball conflict sets.
Following this idea, we finally implement the DAG of transaction-input pairs such that multiple
transactions can be batched together per query.

%}

%However, supporting only a single input is both inconvenient and inefficient.
%One has to issue multiple transactions to merge the unspent balance in order to
%make larger payments. Moreover, with more than one output, the growth of unspent output
%set (i.e. UTXO set) size could be exponential, causing
%trouble for maintenance. However, if we also require a single output for
%all transactions, the granularity of coins cannot be changed, and one
%separated transaction is needed for transfering each unit of the currency.

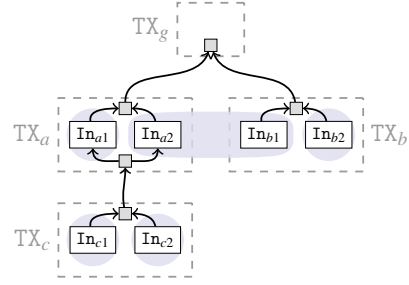
\begin{figure}[t]
\begin{center}
    \definecolor{lightgray}{HTML}{dddddd}
\definecolor{medgray}{HTML}{cccccc}
\definecolor{medgray2}{HTML}{bbbbbb}
\definecolor{darkgray}{HTML}{999999}
\definecolor{lightgp}{HTML}{ddddee}
\begin{tikzpicture}[x=1.12cm, scale=0.7, every node/.append style={transform shape}]
    \begin{scope}[all/.style={draw,fill=white,minimum height=0.5cm, minimum width=0.5cm},line width=0.08ex]
    \begin{scope}[color=darkgray,dashed,line width=0.2ex]
        \node[all, label={[anchor=east, font=\fontsize{14}{14}]left:$\texttt{TX}_g$},
        rectangle,minimum width=0.5in, minimum height=0.4in, line width=0.15ex] (a0) at (0, 0) {};
        \node[all, label={[anchor=east, font=\fontsize{14}{14}]left:$\texttt{TX}_a$}, rectangle,minimum width=1in, minimum height=0.55in, line width=0.15ex]
        (a1) at (-1.45, -2) {};
        \node[all, label={[anchor=west, font=\fontsize{14}{14}]right:$\texttt{TX}_b$}, rectangle,minimum width=1in, minimum height=0.55in, line width=0.15ex]
        (a2) at (1.45, -2) {};
        \node[all, label={[anchor=east, font=\fontsize{14}{14}]left:$\texttt{TX}_c$}, rectangle,minimum width=1in, minimum height=0.55in, line width=0.15ex]
        (a3) at (-1.45, -4) {};
    \end{scope}
    
        \node[fill=lightgp, circle,minimum width=1cm, opacity=0.8] (pt1) at (-2, -2) {};
        \node[fill=lightgp, circle,minimum width=1cm, opacity=0.8] (pt4) at (2, -2) {};
        \node[fill=lightgp, circle,minimum width=1cm, opacity=0.8] (pt5) at (-2, -4) {};
        \node[fill=lightgp, circle,minimum width=1cm, opacity=0.8] (pt6) at (-0.9, -4) {};
        \fill[color=lightgp, opacity=0.8] plot[smooth cycle] coordinates { (1.4, -2) (1.1, -2.4) (-1.1, -2.4) (-1.4, -2) (-1.1, -1.6) (1.1, -1.6)} node (pt2) {};
        %\fill[color=lightgp] plot[smooth cycle] coordinates {(2.1, -1.75) (2, -2.4) (0.6, -3.4) (0, -3.5) (-0.3, -3.3)(-0.3, -2.8) (0.1, -2.4) (1.2, -1.65) (1.7, -1.5) } node (pt3) {};
        
        \node[rectangle, fill=lightgray, draw, minimum width=1ex](ain) at (-1.45, -1.5) {};
        \node[rectangle, fill=lightgray, draw, minimum width=1ex](bin) at (1.45, -1.5) {};
        \node[rectangle, fill=lightgray, draw, minimum width=1ex](aout) at (-1.45, -2.5) {};
        \node[rectangle, fill=lightgray, draw, minimum width=1ex](cin) at (-1.45, -3.5) {};
        \node[rectangle, fill=lightgray, draw, minimum width=1ex](gout) at (0, -0.3) {};
        
        %\node[all, draw] (b0) at (0, 0) {$\texttt{In}_a$};
        \node[all, draw] (b1) at (-2, -2) {$\texttt{In}_{a1}$};
        \node[all, draw] (b2) at (-0.9, -2) {$\texttt{In}_{a2}$};
        \node[all, draw] (b3) at (0.9, -2) {$\texttt{In}_{b1}$};
        \node[all, draw] (b4) at (2, -2) {$\texttt{In}_{b2}$};
        
        \node[all, draw] (b5) at (-2, -4) {$\texttt{In}_{c1}$};
        \node[all, draw] (b6) at (-0.9, -4) {$\texttt{In}_{c2}$};
        
        \begin{scope}[line width=0.15ex]
        \path[->] (ain) edge[out=90, in=-110] node[sloped,above] {} (gout) ;
        \path[->] (bin) edge[out=100, in=-70] node[sloped,above] {} (gout) ;
        \path[->] (cin) edge[out=100, in=-90] node[sloped,above] {} (aout) ;
        \path[->] (b1) edge[out=100, in=180] node[sloped,above] {} (ain);
        \path[->] (b2) edge[out=100, in=0] node[sloped,above] {} (ain);
        \path[->] (b3) edge[out=100, in=180] node[sloped,above] {} (bin);
        \path[->] (b4) edge[out=100, in=0] node[sloped,above] {} (bin);
        \path[->] (b5) edge[out=100, in=180] node[sloped,above] {} (cin);
        \path[->] (b6) edge[out=100, in=0] node[sloped,above] {} (cin);
        \path[->] (aout) edge[out=180, in=-90, looseness=1.3] node[sloped,above] {} (b1);
        \path[->] (aout) edge[out=0, in=-90, looseness=1.3] node[sloped,above] {} (b2);
        \end{scope}
        %\node[] (p1) at (3, 0.5) {\footnotesize$\mathcal{P}_{T_1}$};
        %\node[] (p2) at (3, 0) {\footnotesize$\mathcal{P}_{T_2} = \mathcal{P}_{T_3}$};
        %\node[] (p3) at (3, -0.5) {\footnotesize$\mathcal{P}_{T_9} = \mathcal{P}_{T_6} = \mathcal{P}_{T_7}$};
        \begin{scope}[color=darkgray,dashed,line width=0.2ex]
        %\path[->] (p1) edge[out=180, in=0] node[sloped,above] {} (pt1) ;
        %\path[->] (p2) edge[out=180, in=60] node[sloped,above] {} (pt2) ;
        %\path[->] (p3) edge[out=-90, in=30] node[sloped,above] {} (pt3) ;
        \end{scope}
    \end{scope}
\end{tikzpicture}
    \captionof{figure}{The underlying logical DAG structure used by Avalanche.
    The tiny squares with shades are dummy vertices which just help form the
    DAG topology for the purpose of clarity, and can be replaced by direct
    edges. The rounded gray regions are the conflict sets.}\label{fig:cash-system-b}
\end{center}
\end{figure}

\paragraph{Optimizations}
We implement some optimizations to help the system scale.
First, we use \emph{lazy updates} to the DAG, because the recursive definition for confidence may otherwise require a costly DAG traversal.
We maintain the current $d(T)$ value for each active vertex on the DAG, and update it only when a descendant vertex gets a chit.
Since the search path can be pruned at accepted vertices, the cost for an update is constant if the rejected vertices have a limited number of descendants and the undecided region of
the DAG stays at constant size.
Second, the conflict set could be large in practice, because a rogue client can generate a large volume of conflicting transactions.
Instead of keeping a container data structure for each conflict set, we create a mapping from each UTXO to the preferred transaction that stands as the representative for the entire conflict set.
This enables a node to quickly determine future conflicts, and the appropriate response to queries.
Finally, we speed up the query process by terminating early as soon as the $\alpha$ threshold is met, without waiting for $k$ responses.

%The safety guarantees of Snowball can be mapped to those of Avalanche, which is a concrete instantiation of Snowball using a directed acyclic graph to amortize cost. 
%We note that the structure of the Avalanche DAG itself does not correspond to votes, which is a subtle difference between other consensus protocols that make usage of a DAG. The DAG is merely a performance optimization, and is itself entirely orthogonal to the consensus process.

\paragraph{DAG} Compared to Snowball, Avalanche introduces a DAG structure that entangles the fate of unrelated conflict sets, each of which is a single-decree instance.
This entanglement embodies a tension: attaching a virtuous transaction to undecided parents helps propel transactions towards a decision, while it puts transactions at risk of suffering liveness failures when parents turn out to be rogue.
We can resolve this tension and provide a liveness guarantee with the aid of two mechanisms.

First we adopt an adaptive parent selection strategy, where transactions are attached at the live edge of the DAG, and are retried with new parents closer to the genesis vertex. This procedure is guaranteed to terminate with uncontested, decided parents, ensuring that a transaction cannot suffer liveness failure due to contested, rogue transactions.
A secondary mechanism ensures that virtuous transactions with decided ancestry will receive sufficient chits. Correct nodes examine the DAG for virtuous transactions that lack sufficient progeny and emit no-op transactions to help increase their confidence.
With these two mechanisms in place, it is easy to see that, at worst, Avalanche will degenerate into separate instances of Snowball, and thus provide the same liveness guarantee for virtuous transactions.

Unlike other cryptocurrencies~\cite{IOTA} that use graph vertices
directly as votes, Avalanche only uses a DAG for the purpose of batching queries
in the underlying Snowball instances.
Because confidence is built by collected chits, and not by just the presence of
a vertex, simply flooding the network with vertices attached to the rejected
side of a subgraph will not subvert the protocol.

\subsection{Communication Complexity}
Let the DAG induced by Avalanche have an expected branching factor of $p$, corresponding to the width of the DAG, and determined by the parent selection algorithm.
Given the $\beta_1$ and $\beta_2$ decision threshold, a transaction that has just reached the point of decision will have an associated progeny $\mathcal{Y}$.
Let $m$ be the expected depth of $\mathcal{Y}$.
If we were to let the Avalanche network make progress and then freeze the DAG at a depth $y$,
then it will have roughly $py$ vertices/transactions, of which $p(y - m)$ are decided in expectation.
Only $pm$ recent transactions would lack the progeny required for a decision.
For each node, each query requires $k$ samples, and therefore the total message cost per transaction is in expectation $(pky) / (p(y - m)) = ky/(y-m)$.
Since $m$ is a constant determined by the undecided region of the DAG as the system constantly makes progress, message complexity per node is $O(k)$, while the total complexity is $O(kn)$.

\section{Evaluation}
\label{sec:evaluation}
\newcommand{\sysname}{Avalanche}

%We have fully implemented the proposed payment system.  In this section, we examine its throughput, scalability, and latency
%through a large scale deployment on Amazon AWS\@, and provide a comparison to known results
%from other systems.

\subsection{Setup}%\tronly{}{\vspace{-0.25em}}
% hardware setup
We conduct our experiments on Amazon EC2 by running from hundreds (125) to thousands (2000)
of virtual machine instances.  We use \texttt{c5.large} instances, each of
which simulates an individual node. AWS provides
bandwidth of up to 2 Gbps, though the {\sysname} protocol utilizes at most around 100 Mbps.

% the transaction specification
Our implementation supports two versions of transactions: one is the customized UTXO format,
while the other uses the code directly from Bitcoin 0.16. Both supported formats use secp256k1
crypto library from bitcoin and provide the same address format for wallets. All experiments
use the customized format except for the geo-replication, where results for both are given.

We simulate a constant flow of new transactions from users by creating
separate client processes, each of which maintains
separated wallets, generates transactions with new recipient addresses and
sends the requests to {\sysname} nodes.
%\textbf{[RVR: what is a ``full'' node? (fixed)]}
We use several such client
processes to max out the capacity of our system.  The number of recipients
for each transaction is tuned to achieve average transaction sizes of
around 250 bytes (1--2 inputs/outputs per transaction on average and a stable
UTXO size), the current average transaction size of Bitcoin. To utilize the
network efficiently, we batch up to 40 transactions during a query, but
maintain confidence values at individual transaction granularity.

% how do we measure throughput/latency
All reported metrics reflect end-to-end measurements taken from the perspective of all clients.
That is, clients examine the total number of confirmed
transactions per second for throughput, and, for each transaction,
subtract the initiation timestamp
from the confirmation timestamp for latency. Each throughput experiment is
repeated for 5 times and standard deviation is indicated in each figure.
%\tronly{Because we
%saturate the capacity of the system in all runs, some transactions
%will have much higher latency than most, we use the 1.5$\times$IQR rule
%commonly used to filter out outliers. Take the geo-replicated experiment as an example,
%$15.8\%$ data are the outliers having 6.63 second latency on average. They fall
%out of 1.5$\times$IQR (approximately $3\sigma$) range and are filtered out. There are very few outliers
%when the system is not saturated. All reported latencies (including maximum) are those not filtered.
%%\textbf{[RVR: how high are these outliers?  how many are there?  do they still happen if the system is not saturated? (fixed)]}
%}{}
As for security parameters, we pick $k = 10$, $\alpha = 0.8$, $\beta_1 = 11$, $\beta_2 = 150$, which yields an MTTF of \textasciitilde{}$10^{24}$ years.

\subsection{Throughput} %\tronly{}{\vspace{-0.5em}}

We first measure the throughput of the system by saturating it with
transactions and examining the rate at which transactions are confirmed in the
steady state.  For this experiment, we first run {\sysname} on 125 nodes
with 10 client processes, each of which maintains 400 outstanding transactions at
any given time.

As shown by the first group of bars in Figure~\ref{fig:eval-thr}, the system achieves
6851 transactions per second (tps) for a batch size of 20 and above 7002 tps for a batch size of 40.
Our system is saturated by a small batch size comparing to other blockchains with known performance:
Bitcoin batches several thousands of transactions per block, Algorand~\cite{GiladHMVZ17} uses 2--10 Mbyte blocks, i.e., 8.4--41.9K tx/batch and Conflux~\cite{confluxLLXLC18} uses 4 Mbyte blocks, i.e., 16.8K tx/batch. These systems are relatively slow in making a single decision, and thus require a very large batch (block) size for better performance. Achieving high throughput with small batch size implies low latency, as we will show later.

\begin{figure}[h]
\includegraphics[width=\linewidth]{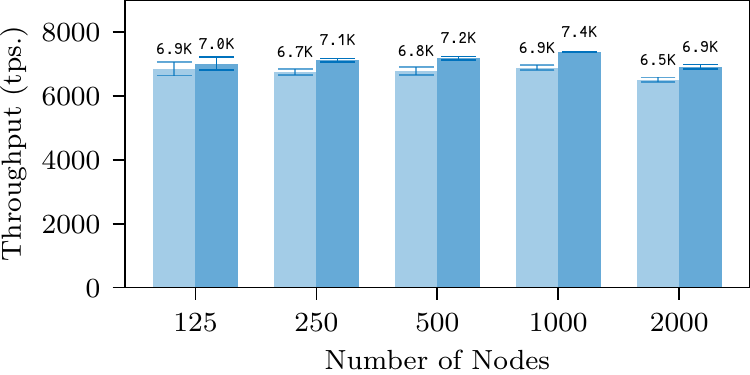}
\captionof{figure}{Throughput vs. network size. Each pair of bars is produced with batch size of 20 and 40, from left to right.}
%    The slight degradation with doubled network sizes illustrates scalability.}
\label{fig:eval-thr}
\end{figure}

\begin{figure}
\includegraphics[width=\linewidth]{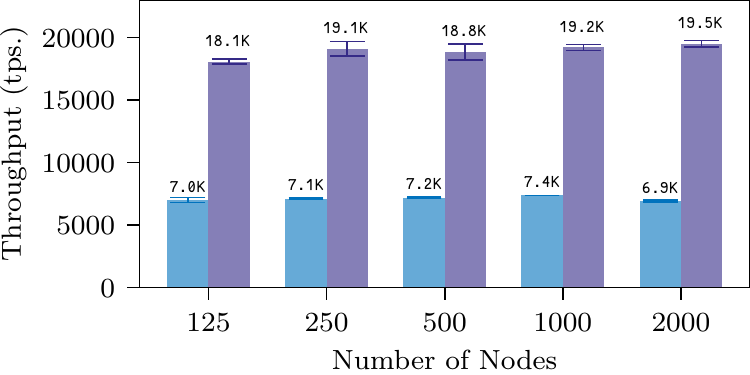}
\captionof{figure}{Throughput for batch size of 40, with (left) and without (right) signature verification.}
\label{fig:eval-thr-raw}
\end{figure}

\subsection{Scalability}%\tronly{}{\vspace{-0.25em}}

To examine how the system scales in terms of the number of nodes
participating in {\sysname} consensus, we run experiments with identical settings
and vary the number of nodes from 125 up to 2000.

Figure~\ref{fig:eval-thr} shows that overall throughput degrades about $1.34\%$ to 6909 tps when the network grows by a factor of 16 to $n = 2000$.
This degradation is minor compared to the growth of the network size.
Note that the x-axis is logarithmic.
% , and thus throughput degradation is negligible.

Avalanche acquires its scalability from three sources: first,
maintaining a partial order that captures only the spending relations
allows for more concurrency than a classical BFT replicated
log that linearizes all transactions; second, the lack of a leader naturally avoids bottlenecks;
finally, the number of messages each node has to handle per decision is $O(k)$ and does not grow as the network scales up.
% Another way to view this is that {\sysname} implicitly performs similar to sharding with
% the help of the DAG, in that multiple nodes can work on different parts of the
% central data structure concurrently.
%{\sysname} trades off total order for a scalable system.

%(Figure 4: x-axis is the number of client processes, y-axis is the averaged latency)

%\tronly{{}
\subsection{Cryptography Bottleneck}%\tronly{}{\vspace{-0.25em}}

We next examine where bottlenecks lie in our current implementation.
The purple bar on the right of each group in Figure~\ref{fig:eval-thr-raw} shows the throughput of Avalanche with signature verification
disabled. Throughputs get approximately 2.6x higher, compared to the blue bar on the left.
This reveals that cryptographic verification overhead is the current bottleneck of our system implementation.
This bottleneck can be addressed by offloading
transaction verification to a GPU\@. Even without such optimization,
7K tps is far in excess of extant blockchains.
%}{}

\subsection{Latency}%\tronly{}{\vspace{-0.25em}}

\begin{figure}
\includegraphics[width=\linewidth]{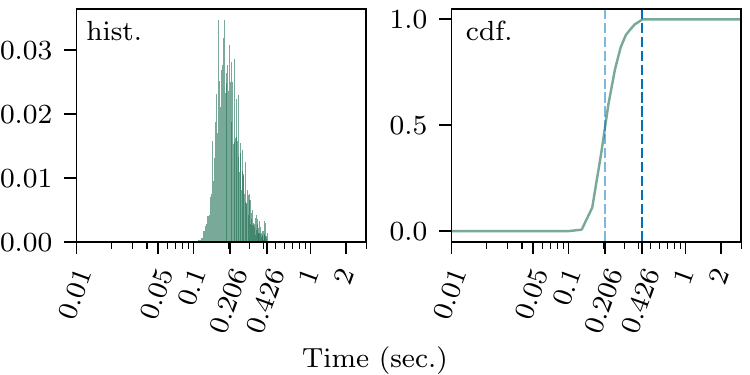}
\captionof{figure}{Transaction latency distribution for $n = 2000$. The x-axis
    is the transaction latency in log-scaled seconds, while the
    y-axis is the portion of transactions that fall into the confirmation
    time (normalized to $1$).  Histogram of all transaction latencies for a client is shown on the left with $100$ bins,
    while its CDF is on the right.}
\label{fig:eval-lat1}
\end{figure}

The latency of a transaction is the time spent from the moment of its
submission until it is confirmed as accepted.
%\textbf{[RVR: this is not quite clear to me.  Where is finality defined? (fixed)  I only see the isAccepted() predicate that checks finality, but I assume finality, whatever it is, is achieved before a client learns about it.]}
Figure~\ref{fig:eval-lat1} tallies
the latency distribution histogram using the same setup as for the throughput
measurements with 2000 nodes. The x-axis is the time in seconds while the y-axis
is the portion of transactions that are finalized within the corresponding
time period. This figure also outlines the Cumulative Distribution Function (CDF)
by accumulating the number of finalized transactions over time.

This experiment shows that most transactions are confirmed within approximately 0.3 seconds.
% The sharp transition that goes from nearly $0\%$ to $100\%$ shows that
The most common latencies are around 206 ms and variance is low,
indicating that nodes converge on the final value as a group around the same time.
The second vertical line shows the maximum latency we observe, which is around 0.4 seconds.
% (XXX: more on the distribution?)
%\textbf{[RVR: why use a log scale for the x-axis?  Is the transition really sharp? (yes) It doesn't look to me that way.  Is the 1.1 second maximum latency before or after removing outliers? (fixed) Is ``confirmation time'' the same as latency, and if so, why use different terms? (fixed)  The histogram is tiny... (yeah, because of cdf..)]}

Figure~\ref{fig:eval-lat2} shows
transaction latencies for different numbers of nodes.
The horizontal edges of boxes represent minimum,
first quartile, median, third quartile and maximum latency respectively, from
bottom to top.  Crucially, the experimental data show that median latency is more-or-less
independent of network size.
%\textbf{[RVR: the x-axis *should* be a log scale here.  It looks like a log scale but it is linear. (fixed)]}

\begin{figure}[h]
\includegraphics[width=\linewidth]{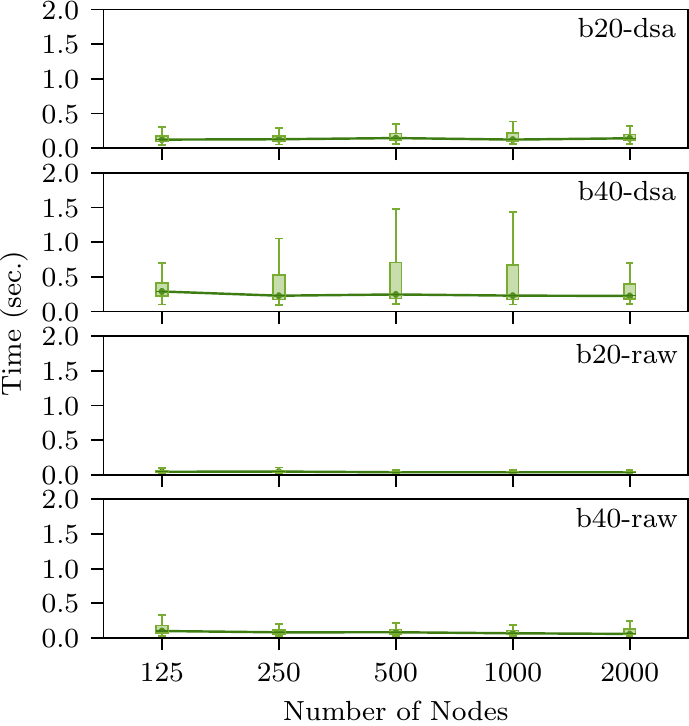}
\captionof{figure}{Transaction latency vs. network size. ``b'' indicates batch size and ``raw'' is the run without signature verification.}
\label{fig:eval-lat2}
\end{figure}

\subsection{Misbehaving Clients}%\tronly{}{\vspace{-0.25em}}
\label{sec:evaluation-misbehaving}
\begin{figure}
\includegraphics[width=\linewidth]{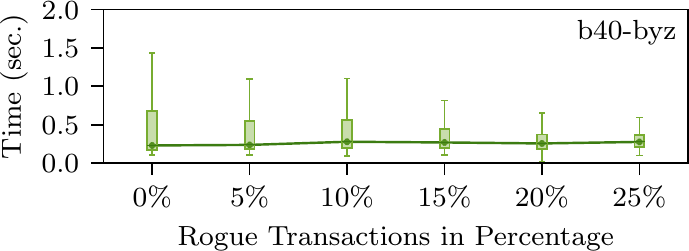}
\captionof{figure}{Latency vs. ratio of rogue transactions.}
\label{fig:eval-lat3}
\end{figure}

We next examine how rogue transactions issued by misbehaving clients that
double spend unspent outputs can affect latency for virtuous
transactions created by honest clients. We adopt a strategy to
simulate misbehaving clients where a fraction (from $0\%$ to $25\%$) of the
pending transactions conflict with some existing ones.
%\textbf{[RVR: this is confusing, but maybe I should reread Section 2.  It used to be the case that it was impossible to create transactions that conflict with virtuous ones. (they double spend their own unspent outputs)]}
The client processes
achieve this by designating some double spending transaction flows among all
simulated pending transactions and sending the conflicting transactions to
different nodes. We use the same setup with $n = 1000$ as in the previous
experiments, and only measure throughput and latency of confirmed transactions.

{\sysname}'s latency is only slightly affected by misbehaving clients, as shown
in Figure~\ref{fig:eval-lat3}. Surprisingly, maximum latencies drop slightly when
the percentage of rogue transactions increases.  This behavior occurs
because, with the introduction of rogue transactions, the overall
\emph{effective} throughput is reduced and thus alleviates system load.
%\tronly{
This is confirmed by
Figure~\ref{fig:eval-thr2}, which shows that throughput (of virtuous transactions) decreases with the ratio of rogue transactions.
Further, the reduction in throughput appears proportional to the number of misbehaving clients,
that is, there is no leverage provided to the attackers.
%}{
%This is confirmed by throughput decreases with the ratio of rogue transactions.
%Further, the reduction in throughput appears proportional to the number of misbehaving clients,
%that is, there is no leverage provided to the attackers.
%}
%(XXX: is this universally true, or just for this attacker? Is this the strongest attacker we can imagine?)

\begin{figure}
\includegraphics[width=\linewidth]{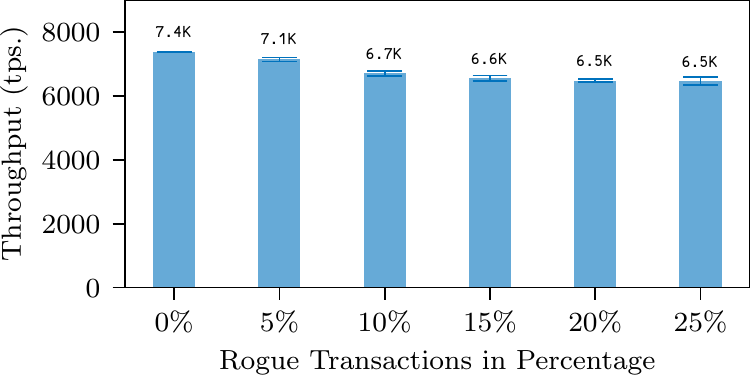}
\captionof{figure}{Throughput vs. ratio of rogue transactions.}
\label{fig:eval-thr2}
\end{figure}

\subsection{Geo-replication}%\tronly{}{\vspace{-0.25em}}

\begin{figure}
\includegraphics[width=\linewidth]{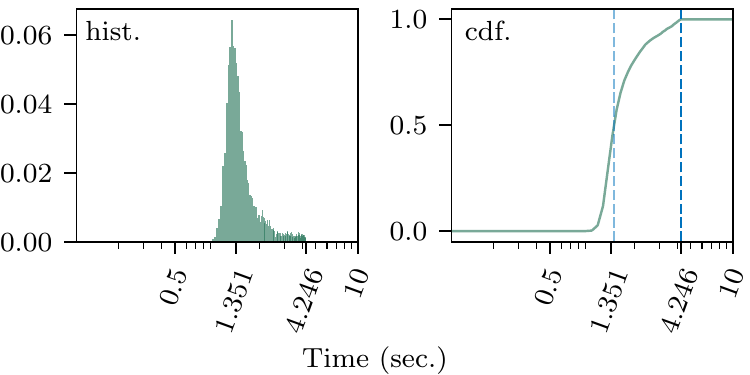}
\captionof{figure}{Latency histogram/CDF for $n = 2000$ in 20 cities.}
\label{fig:eval-lat4}
\end{figure}

Next experiment shows the system in an emulated geo-replicated scenario, patterned
after the same scenario in prior work~\cite{GiladHMVZ17}.
We selected 20 major cities that appear to be near substantial numbers of reachable
Bitcoin nodes, according to~\cite{bitnodes2018}. The cities cover North America,
Europe, West Asia, East Asia, Oceania, and also cover the top 10 countries with the
highest number of reachable nodes. We use the latency and jitter matrix crawled
from~\cite{wondernetworkping2018} and emulate network packet latency in the Linux kernel using
\texttt{tc} and \texttt{netem}. 2000 nodes are distributed evenly to each
city, with no additional network latency emulated between nodes within the same city.
Like Algorand's evaluation, we also cap our bandwidth per process to 20 Mbps
to simulate internet-scale settings where there are many commodity network links.
We assign a client process to each city, maintaining 400 outstanding transactions per city
at any moment.

In this scenario, Avalanche achieves an average throughput of 3401 tps, with a standard deviation of 39 tps.
As shown in Figure~\ref{fig:eval-lat4}, the median transaction latency
is 1.35 seconds, with a maximum latency of 4.25 seconds. We also support native Bitcoin code
for transactions; in this case, the throughput is 3530 tps, with $\sigma = 92$ tps.

\subsection{Comparison to Other Systems}%\tronly{}{\vspace{-0.25em}}
Though there are seemingly abundant blockchain or cryptocurrency protocols,
most of them only present a sketch of their protocols and do not offer practical implementation or evaluation results.
Moreover, among those who do provide results, most are not evaluated in realistic, large-scale (hundreds to thousands of full nodes participating in consensus) settings.

Therefore, we choose Algorand and Conflux for our comparison. Algorand, Conflux, and Avalanche are all fundamentally different in their
% consensus principle and
design. Algorand's committee-scale consensus algorithm is quorum-based Byzantine agreement, and Conflux extends Nakamoto consensus by a DAG structure to facilitate higher throughput, while Avalanche belongs to a new protocol family based on metastability. Additionally, we use Bitcoin~\cite{nakamoto2008bitcoin} as a baseline.

Both Algorand and Avalanche evaluations use a decision network of size 2000 on EC2.
Our evaluation picked shared \texttt{c5.large} instances, while Algorand used \texttt{m4.2xlarge}\@.
These two platforms are very similar except for a slight CPU clock speed edge for \texttt{c5.large}, which goes largely unused because our process only consumes $30\%$ in these experiments.
The security parameters chosen in our experiments guarantee a safety violation
probability below $10^{-9}$ in the presence of $20\%$ Byzantine nodes, while
Algorand's evaluation guarantees a violation probability below $5 \times 10^{-9}$ with $20\%$ Byzantine nodes.

Neither Algorand nor Conflux evaluations take into account the overhead of cryptographic verification.
Their evaluations use blocks that carry megabytes of dummy data and present the throughput in MB/hour or GB/hour unit. So we use the average size of a Bitcoin transaction, 250 bytes, to derive their throughputs. In contrast, our experiments carry real transactions and fully take all cryptographic overhead into account.

The throughput is 3-7 tps for Bitcoin,
874 tps for Algorand (with 10 Mbyte blocks),
3355 tps for Conflux (in the paper it claims 3.84x Algorand's throughput under the same settings).

In contrast, {\sysname} achieves over 3400 tps
consistently on up to 2000 nodes without committee or proof-of-work.
As for latency, a transaction is confirmed after 10--60
minutes in Bitcoin, around 50 seconds in Algorand,
7.6--13.8 minutes in Conflux, and 1.35 seconds in {\sysname}.

Avalanche performs much better than Algorand in both throughput and latency
because Algorand uses a verifiable random function to elect committees, and
maintains a totally-ordered log while {\sysname} establishes only a partial
order. Although Algorand's leadership is anonymous and changes continuously,
it is still leader-based which could be the bottleneck for scalability, while
{\sysname} is leader-less.

Avalanche has similar throughput to Conflux, but its latency is 337--613x better.
Conflux also uses a DAG structure to amortize the cost for consensus and increase the throughput,
however, it is still rooted in Nakamoto consensus (PoW), making it unable to
have instant confirmation compared to Avalanche.

In a blockchain system, one can usually improve throughput at the cost of latency through batching. The real bottleneck of the performance is the number of decisions the system can make per second, and this is fundamentally limited by either Byzantine Agreement ($\mathrm{BA}^{*}$) in Algorand and Nakamoto consensus in Conflux.

\section{Related Work}
\label{sec:related-work}
\label{section:background}
\tronly{
Bitcoin~\cite{nakamoto2008bitcoin} is a cryptocurrency that uses a blockchain based on
proof-of-work (PoW) to maintain a ledger of UTXO transactions. 
While techniques based on proof-of-work~\cite{DworkN92, aspnes2005exposing}, and even cryptocurrencies with minting based on proof-of-work~\cite{vishnumurthy2003karma,rivest1997payword}, have been explored before, Bitcoin was the first to incorporate PoW into its consensus process.
Unlike more traditional BFT protocols, Bitcoin has a probabilistic safety guarantee
and assumes honest majority computational power rather than a known
membership, which in turn has enabled an internet-scale permissionless protocol. While permissionless and resilient to adversaries,
Bitcoin suffers from low throughput (\textasciitilde{}3 tps) and
high latency (\textasciitilde{}5.6 hours for a network with 20\% Byzantine presence and $2^{-32}$ security guarantee).  Furthermore, PoW requires a substantial amount of computational power that is consumed only for the purpose of maintaining
safety.

% PoW family
Countless cryptocurrencies use PoW~\cite{DworkN92, aspnes2005exposing} to maintain a distributed ledger. 
Like Bitcoin, they suffer from inherent scalability bottlenecks. 
Several proposals for protocols exist that try to
better utilize the effort made by PoW.
Bitcoin-NG~\cite{EyalGSR16} and the permissionless version of
Thunderella~\cite{PassS18} use Nakamoto-like consensus to elect a leader that
dictates writing of the replicated log for a relatively long time so as to
provide higher throughput. Moreover, Thunderella provides an
optimistic bound that, with 3/4 honest computational power and an honest
elected leader, allows transactions to be confirmed rapidly.
ByzCoin~\cite{Kokoris-KogiasJ16} periodically selects a small set of
participants and then runs a PBFT-like protocol within the selected nodes.
%It achieves a throughput of 942 tps with about 35 second latency.

% Byzantine Agreement family
Protocols based on Byzantine agreement~\cite{PeaseSL80, LamportSP82} typically make use of
quorums and require precise knowledge of membership.
PBFT~\cite{castro1999practical, CL02}, a well-known representative, requires a quadratic number of message exchanges in order to
reach agreement. 
%Some variants are able to scale to dozens of nodes~\cite{ClementWADM09, KotlaADCW09}.
The Q/U protocol~\cite{abd2005fault} and HQ replication~\cite{cowling2006hq} use a quorum-based approach to optimize for contention-free cases of operation to achieve consensus in only a single round of communication. However, although these protocols improve on performance, they degrade very poorly under contention. Zyzzyva~\cite{KotlaADCW09} couples BFT with speculative execution to improve the failure-free operation case.
%Aliph~\cite{guerraoui2010next} introduces a protocol with optimized performance under several, rather than just one, cases of execution. In contrast, Ardvark~\cite{ClementWADM09} sacrifices some performance to tolerate worst-case degradation, providing a more uniform execution profile. This work, in particular, sacrifices failure-free optimizations to provide consistent throughput even at high number of failures. 
Past work in permissioned BFT systems typically requires at least $3f+1$ replicas. CheapBFT~\cite{kapitza2012cheapbft} leverages trusted hardware components to construct a protocol that uses $f+1$ replicas.

Other work attempts to introduce new protocols under redefinitions and relaxations of the BFT model. 
Large-scale BFT~\cite{rodrigues2007large} modifies PBFT to allow for arbitrary choice of number of replicas and failure threshold, providing a probabilistic guarantee of liveness for some failure ratio but protecting safety with high probability. 
In another form of relaxation. Zeno~\cite{singh2009zeno} introduces a BFT state machine replication protocol that trades consistency for high availability. More specifically, Zeno guarantees eventual consistency rather than linearizability, meaning that participants can be inconsistent but eventually agree once the network stabilizes. By providing an even weaker consistency guarantee, namely fork-join-causal consistency, Depot~\cite{mahajan2011depot} describes a protocol that guarantees safety under $2f+1$ replicas. 

NOW~\cite{guerraoui2013highly} uses sub-quorums to drive smaller instances of consensus. The insight of this paper is that small, logarithmic-sized quorums can be extracted from a potentially large set of nodes in the network, allowing smaller instances of consensus protocols to be run in parallel. 

Snow White~\cite{cryptoeprint:2016:919} and
Ouroboros~\cite{KiayiasRDO17} are some of the earliest provably secure PoS
protocols.  Ouroboros uses a secure multiparty coin-flipping protocol to
produce randomness for leader election. The follow-up protocol, Ouroboros
Praos~\cite{DavidGKR18} provides safety in the presence of fully adaptive
adversaries.
HoneyBadger~\cite{MillerXCSS16} provides good liveness in a network with heterogeneous latencies. %and achieves over 819 tps with 5 minute latency on 104 nodes.

Tendermint~\cite{buchman2016tendermint, 1807.04938} rotates the leader for each block
and has been demonstrated with as many as 64 nodes. Ripple~\cite{schwartz2014ripple} has low latency by utilizing collectively-trusted
sub-networks in a large network. The Ripple company provides a
slow-changing default list of trusted nodes, which renders the system essentially centralized.
HotStuff~\cite{hotstuff,hotstuffpodc} improves the communication cost from quadratic to linear and significantly simplifies the protocol specification, although the leader bottleneck still persists. Facebook uses HotStuff as the core consensus for its Libra project.
% Ittai paper
In the synchronous setting, inspired by HotStuff, Sync HotStuff~\cite{synchotstuff} achieves consensus in $2\Delta$ time with quadratic cost and unlike other lock-steped synchronous protocols, it operates as fast as network propagates.
Stellar~\cite{mazieres2015stellar} uses Federated Byzantine Agreement in which \emph{quorum slices}
enable heterogeneous trust for different nodes.  Safety is guaranteed when
transactions can be transitively connected by trusted quorum slices.
Algorand~\cite{GiladHMVZ17} uses a verifiable random function to select a
committee of nodes that participate in a novel Byzantine consensus
protocol.
%It achieves over 874 tps with 50 second latency on
%an emulated network of 2000 committee nodes (500K users in total) distributed among 20 cities. To prevent Sybil
%attacks, it uses a mechanism like proof-of-stake that assigns weights to participants in committee selection based on the money in their accounts.

Some protocols use a Directed Acyclic Graph (DAG) structure instead of a linear chain to achieve
consensus~\cite{SompolinskyZ15,SompolinskyLZ16,SompolinskyZ18,BentovHMN17,baird2016hashgraph}.
Instead of choosing the longest chain as in Bitcoin,
GHOST~\cite{SompolinskyZ15} uses a more efficient chain selection rule that
allows transactions not on the main chain to be taken into consideration, increasing efficiency.
SPECTRE~\cite{SompolinskyLZ16} uses transactions on
the DAG to vote recursively with PoW to achieve consensus, followed up by
PHANTOM~\cite{SompolinskyZ18} that achieves a linear order among all blocks.
Like PHANTOM, Conflux also finalizes a linear order of transactions by PoW
in a DAG structure, with better resistance to liveness attack~\cite{confluxLLXLC18}.
Avalanche
is different in that the voting result is a one-time chit that is determined by
a query without PoW, while the votes in PHANTOM or Conflux are purely determined by PoW in transaction structure.
Similar to Thunderella, Meshcash~\cite{BentovHMN17} combines a slow PoW-based protocol with a fast consensus protocol that allows a high block rate regardless of network latency, offering fast confirmation time.
Hashgraph~\cite{baird2016hashgraph} is a leader-less protocol that builds a DAG via randomized gossip.
It requires full membership knowledge at all times, and, similar to the Ben-Or~\cite{ben1983another}, it requires exponential messages~\cite{aspnes2003randomized,CachinV17} in expectation.
}{
% PoW family
Several proposals for protocols exist that try to
better utilize the effort made by PoW.
Bitcoin-NG~\cite{EyalGSR16} and the permissionless version of
Thunderella~\cite{PassS18} use Nakamoto consensus to elect a leader that
dictates writing of the replicated log for a relatively long time so as to
provide higher throughput. Moreover, Thunderella provides an
optimistic bound that, with 3/4 honest computational power and an honest
elected leader, allows transactions to be confirmed rapidly.
ByzCoin~\cite{Kokoris-KogiasJ16} periodically selects a small set of
participants and then runs a PBFT-like protocol within the selected nodes. 
%It achieves a throughput of 942 tps with about 35 second latency.

% Byzantine Agreement family
Protocols based on Byzantine agreement~\cite{PeaseSL80, LamportSP82} typically make use of
quorums and require precise knowledge of membership.
PBFT~\cite{castro1999practical, CL02}, a well-known representative, requires a quadratic number of message exchanges in order to
reach agreement. Some variants are able to scale to dozens of nodes~\cite{ClementWADM09, KotlaADCW09}.
% However, due to the quadratic message complexity, many variants based on PBFT also inherit this scalability issue.
Snow White~\cite{cryptoeprint:2016:919} and
Ouroboros~\cite{KiayiasRDO17} are some of the earliest provably secure PoS
protocols.  Ouroboros uses a secure multiparty coin-flipping protocol to
produce randomness for leader election. The follow-up protocol, Ouroboros
Praos~\cite{DavidGKR18} provides safety in the presence of fully adaptive
adversaries.
HoneyBadger~\cite{MillerXCSS16} provides good liveness in a network with heterogeneous latencies. %and achieves over 819 tps with 5 minute latency on 104 nodes.
Tendermint~\cite{buchman2016tendermint, 1807.04938} rotates the leader for each block
and has been demonstrated with as many as 64 nodes. Ripple~\cite{schwartz2014ripple} has low latency by utilizing collectively-trusted
sub-networks in a large network. The Ripple company provides a
slow-changing default list of trusted nodes, which renders the system essentially centralized.
HotStuff~\cite{hotstuff,hotstuffpodc} improves the communication cost from quadratic to linear and significantly simplifies the protocol specification, although the leader bottleneck still persists. Facebook uses HotStuff as the core consensus~\cite{librabft} for the Libra project.
% Ittai paper
In the synchronous setting, inspired by HotStuff, Sync HotStuff~\cite{synchotstuff} achieves consensus in $2\Delta$ time with quadratic cost and unlike other lock-steped synchronous protocols, it operates as fast as network propagates.
Stellar~\cite{mazieres2015stellar} uses Federated Byzantine Agreement in which \emph{quorum slices}
enable heterogeneous trust for different nodes.  Safety is guaranteed when
transactions can be transitively connected by trusted quorum slices.

Some protocols use a Directed Acyclic Graph (DAG) structure instead of a linear chain to achieve
consensus~\cite{SompolinskyZ15,SompolinskyLZ16,SompolinskyZ18,BentovHMN17,baird2016hashgraph}.
Instead of choosing the longest chain as in Bitcoin,
GHOST~\cite{SompolinskyZ15} uses a more efficient chain selection rule that
allows transactions not on the main chain to be taken into consideration, increasing efficiency.
SPECTRE~\cite{SompolinskyLZ16} uses transactions on
the DAG to vote recursively with PoW to achieve consensus, followed up by
PHANTOM~\cite{SompolinskyZ18} that achieves a linear order among all blocks.
Avalanche
is different in that the voting result is a one-time chit that is determined by
a query without PoW, while the votes in PHANTOM are purely determined by transaction structure.
Similar to Thunderella, Meshcash~\cite{BentovHMN17} combines a slow PoW-based protocol with a fast consensus protocol that allows a high block rate regardless of network latency, offering fast confirmation time.
Hashgraph~\cite{baird2016hashgraph} is a leader-less protocol that builds a DAG via randomized gossip. 
It requires full membership knowledge at all times, and, similar to the Ben-Or~\cite{ben1983another}, it requires exponential messages~\cite{aspnes2003randomized,CachinV17} in expectation.
}

\begin{comment}
Moreover, in Theorem 5.16, the author mentions if
there is no supermajority virtual vote, all honest nodes will randomly choose
their votes for the next round. Then with non-zero probability, they could
choose the same vote.  The rounds are repeated until all honest nodes
eventually reach the same vote by chance.  This means as the total number of
nodes increases in Hashgraph, the probability of reaching the same vote in a
single trial drops exponentially due to the random voting~\cite{CachinV17}.
\end{comment}
%The paper does not discuss how many rounds of voting are needed in expectation,
%nor\ \ have we been able to obtain open-source code to evaluate Hashgraph.

%\tronly{
%Instead of voting blindly and relying on randomness
%for each agreement, Avalanche creates a positive feedback where votes are
%collected in a separate k-query process determined by current preference
%established locally on each node over time.  We have demonstrated that
%Avalanche is scalable to 2000 nodes distributed world-wide.
%
%The central component of Avalanche is a
%repeated k-query, which serves both as a probabilistic
%broadcast and as a basis for Byzantine agreement.
%As an optimization, these repeated queries are linked
%together via an artificial DAG structure that amortizes the repeated cost of
%querying.  Avalanche does not require overlapping quorums and
%achieves probabilistic consensus with
%$O(kN)$ overall message complexity instead of quadratic
%while allowing for a degree of network churn.
%}

% XXX ADD SAM TOUEG https://zoo.cs.yale.edu/classes/cs426/2017/bib/bracha85asynchronous.pdf

\section{Conclusion}
\label{sec:conclusions}
This paper introduced a novel family of consensus protocols, coupled with the appropriate mathematical tools for analyzing them.
\tronly{These protocols are highly efficient and robust, combining the best features of classical and Nakamoto consensus.}
They scale well, achieve high throughput and quick finality, work without precise membership knowledge, and degrade gracefully under catastrophic adversarial attacks.

There is much work to do to improve this line of research. \tronly{
One such improvement could be the introduction of an adversarial network scheduler.
Another}{One} improvement would be to characterize the system's guarantees under an adversary whose powers are realistically limited, whereupon performance would improve even further. \tronly{Finally, more}{More} sophisticated initialization mechanisms would bear fruitful in improving liveness of multi-value consensus.
Overall, we hope that the protocols and analysis techniques presented here add to the arsenal of the distributed system developers and provide a foundation for new lightweight and scalable mechanisms.

%\section*{Acknowledgments}\tronly{}{\vspace{-0.5em}}
%\label{sec:related-work}
%We are grateful to the many individuals who provided feedback on earlier drafts. 
% Bobby Kleinberg, Vitalik Buterin, ...

%latex here should be the name of your bibtex file minus '.bib'
%\clearpage
\bibliographystyle{acm}
\bibliography{paper}
%\newpage
\begin{appendices}
\section{Analysis}
\label{sec:full-analysis}
In this appendix, we provide an analysis of Slush, Snowflake and Snowball.
%For lack of space, we leave additional details to an accompanying report.

\subsection{Preliminaries}
We assume the network model as discussed in Section~\ref{sec:model_and_goals}. We let $\mathtt{R}$ (``red'') and $\mathtt{B}$ (``blue'') represent two generic conflicting choices.
Without loss of generality, we focus our attention on counts of $\mathtt{B}$, i.e.\ the total number of nodes that prefer blue.

\paragraph{Hypergeometric Distribution} Each network query of $k$ peers corresponds to a sample without replacement out of a network of $n$ nodes, also referred to as a hypergeometric sample.
We let the random variable $\mathcal{H}(\mathcal{N}, x, k) \rightarrow \{0, \dots, k\}$ denote the resulting counts of $\mathtt{B}$ in the sample (unless otherwise stated), where $x$ is the total count of $\mathtt{B}$ in the population. The probability that the query achieves the required threshold of $\alpha$ or more votes is given by:
\begin{equation}
\small
P(\mathcal{H}(\mathcal{N}, x, k) \geq \alpha) = \left.\sum_{j = \alpha}^{k} {x \choose j} {n - x \choose k - j} \middle/ {n \choose k}\right.
\label{eq:hypergeometric}
\end{equation}
For ease of notation, we overload $\mathcal{H}(*)$ by implicitly referring to $P(\mathcal{H}(\mathcal{N}, x, k) \geq \alpha)$ as $\mathcal{H}(\mathcal{N}, x, k, \alpha)$. 

\paragraph{Tail Bounds On Hypergeometric Distribution} We can reduce some of the complexity in Equation~\ref{eq:hypergeometric} by introducing a bound on the hypergeometric distribution induced by $\mathcal{H}^k_{\mathcal{N},x}$.
Let $p=x/n$ be the ratio of support for $\mathtt{B}$ in the population.
The expectation of $\mathcal{H}(\mathcal{N}, x, k)$ is exactly $kp$.
Then, the probability that $\mathcal{H}(\mathcal{N}, x, k)$ will deviate from the mean by more than some small constant $\psi$ is given by the Hoeffding tail bound \cite{hoeffding1963probability}, as follows,
\begin{equation}
\begin{split}
    P(\mathcal{H}(\mathcal{C}, x, k) \leq (p-\psi)k) &\leq e^{-k\mathcal{D}(p-\psi, p)}\\
    &\leq e^{-2(p-\psi)^2k}
\end{split}
\end{equation}
where $\mathcal{D}(p-\psi, p)$ is the Kullback-Leibler divergence, measured as
\begin{equation}
\begin{split}
    \mathcal{D}(a, b) &= a \log \frac{a}{b} + (1 - a) \log \frac{1 - a}{1 - b}
\end{split}
\end{equation}

\paragraph{Concentration of Sub-Martingales}
Let $\{X_{\{t \geq 0\}}\}$ be a sub-martingale and $|X_t - X_{t-1}| < c_t$ almost surely. Then, for all positive reals $\psi$ and all positive integers $t$, 
\begin{equation}
\begin{split}
P(X_t \geq X_0 + \psi) \leq e^{-\psi^2 / 2\sum_{i = 1}^{t} c_t^2}
\end{split}
\label{eq:submartingale}
\end{equation}

\subsection{Slush}
%\label{sec:analysis}
Slush operates in a non-Byzantine setting; that is, $f = 0$, $c = n$.
In this section, we will characterize the irreversibility properties of Slush (which appear in Snowflake and Snowball), as well as the precise converge rate distribution. The distribution of of both safety and liveness of Slush translate well to the Byzantine setting.

% \begin{figure}
%     \small
% \begin{algorithmic}[1]
%     \State initialize $u.\codecolor \in \{\texttt{R},\texttt{B}\}$ for all $u\in\mathcal{N}$
%     \For{$t = 1$ to $\phi$}
%         \State $u \assign \Call{sample}{\mathcal{N}, 1}$
%         \State $\mathcal{K} \assign \Call{sample}{\mathcal{N}\backslash u, k}$
%         \State $P \assign \texttt{[}v.\codecolor\quad\textbf{for}\ v \in \mathcal{K}\texttt{]}$
%         \For{$\codecolor' \in \{\texttt{R}, \texttt{B}\}$}
%             \If{$P.\Call{count}{\codecolor'} \ge \alpha$}
%                 \State $u.\codecolor \assign \codecolor'$
%             \EndIf
%         \EndFor
%     \EndFor
% \captionof{figure}{Slush run by a global scheduler.}\label{fig:slush_protocol_simulator}
% \end{algorithmic}
% \end{figure}

The procedural version of Slush in Figure~\ref{fig:slush-loop} made use of a parameter $m$, the number of rounds that a node executes Slush queries. 
% To derive this parameter, we transform the protocol execution from a procedural and concurrent version to one carried out by a scheduler, shown in Figure~\ref{fig:slush_protocol_simulator}.
What we ultimately want to extract is the total number of rounds $\phi$ that the scheduler will need to execute in order to guarantee that the entire network is the same color, whp.

We analyze the system mainly using a continuous time process. Let $\{X_{\{t \geq 0\}}\}$ be a CTMC.
The state space $\mathcal{S}$ of the stochastic process is a condensed version of the full configuration space, where each state $\{0, \dots, n\}$ represents the total number of blue nodes in the system. 

Let $\mathcal{F}_{X_s}$ be the filtration, or the history pertaining to the process, up to time $s$. This process is Markovian and time-homogeneous, conforming to 
\[
    P\{X_t = j | \mathcal{F}_{X_s}\} = P\{X_t = j | X_s\} = P\{X_t = j | X_0\}    
\]
Throughout the paper, we use $Q \equiv (q_{ij}, i, j \in \mathcal{S})$ notation to refer to the infinitesimal generator of the process, where death ($i \rightarrow i-1$) and birth ($i \rightarrow i+1$) rates of configuration transitions are denoted via $\mu_i$ and $\lambda_i$ ($\lambda_ i$ is distinct from the clock parameter $\lambda$, and will be clear from context). These rates are 
\[
    \begin{cases}
        \mu_i = i\ \mathcal{H}(\mathcal{N}, c-i, k, \alpha), & \text{for } i \rightarrow i - 1 \\
        \lambda_i = (c-i)\ \mathcal{H}(\mathcal{N}, i, k, \alpha), & \text{for } j \rightarrow i + 1 \\
    \end{cases}
\]
for $1\leq i\leq c-1$, and where $i = 0$ and $i = c$ are absorbing. Let $p_{ij}(t)$ refer to the probability of transitioning from state $i$ to $j$ at time $t$. 
We always assume that 
\[
    p_{ij}(t) = 
    \begin{cases}
      \lambda_it + o(t), & \text{for } j = i + 1 \\
      \mu_it + o(t), & \text{for } j = i - 1 \\
      1 - (\lambda_i + \mu_i)t + o(t), & \text{for } j = i \\
      o(t), & \text{otherwise }\\
    \end{cases}
\]
where all $o(t)$ are uniform in $i$. 

\paragraph{Irreversibility}
%We now discuss the core results of irreversibility properties of our family of protocols.
In Section~\ref{sec:analysis}, we discussed the loose Chvatal bound which provided intuitive understanding into the strong irreversibility dynamics of our core subsampling mechanism. In particular, once the network drifts to some majority value, it tends to revert back with only an exponentially small probability. We compute the closed-form expression for reversibility, and show that it is exponentially small.
\begin{theorem}
\label{theorem:slush_prob_convergence_minority}
Let $\xi_\delta$ be the probability of absorption into the all-red state ($s_0$), starting from a drift of $\delta$ (i.e. $\delta$ drift away from $n/2$). Then, assuming $\delta > 1$, 
\begin{equation}
\xi_\delta = 1 - \ddfrac{\sum_{l = 1}^{\delta} \prod_{i = 1}^{l-1} \mu_i^2 \prod_{j = l}^{n-l}\lambda_j}{2\sum_{l = 1}^{n/2}\prod_{i=1}^{l-1}\mu_i^2\prod_{j=l}^{n-l}\mu_j}
\end{equation}
and
% \begin{equation}
% \xi_\delta - \xi_{\delta+1} = \ddfrac{\mu_1^2\dots\mu_{\delta}^2\lambda_{\delta+1}\dots\lambda_{n-\delta}}{\Phi(\alpha, k, n)}
% \end{equation}
% where $\Phi(\alpha, k, n)$ is a constant dependent on $\alpha$, $k$, and $n$, but otherwise independent of starting configuration. 
\begin{equation}
\begin{split}
\ddfrac{\xi_{\delta} - \xi_{\delta+1}}{\xi_{\delta+1} - \xi_{\delta+2}} &= \mathcal{u}_{\delta+1} = \ddfrac{\lambda_{\delta+1}}{\mu_{\delta+1}} \\
&\approx \ddfrac{n-\delta-1 \sum_{j = \alpha}^{k}\ddfrac{(n-\delta-1)^k (\delta+1)^{k-j}}{n^{2k - j}}}{\delta+1 \sum_{j = \alpha}^{k}\ddfrac{(\delta+1)^k (n-\delta-1)^{k-j}}{n^{2k - j}}}
\end{split}
\end{equation}
where from now on we refer to $\mathcal{u}_{\delta+1}$ as the drift of the process. 
\end{theorem}

\begin{proof}
Our results are derived based on constructions from Tan~\cite{tan1976absorption}. We construct a sub-matrix of $Q$, denoted $B$, as shown in Figure~\ref{fig:matrixB}.
\begin{figure*}
\[B = 
\begin{bmatrix}
    -(\lambda_1 + \mu_1) & \lambda_1 & 0 & \cdots & \cdots & 0 \\
    \mu_2 & -(\lambda_2 + \mu_2) & \lambda_2 & 0 & \cdots & 0\\
    0 & \mu_3 & -(\lambda_3 + \mu_3) & \lambda_3 & \cdots & 0\\
    \vdots & \vdots & \ddots & \ddots & \ddots & \vdots\\
    \vdots & \vdots & \mu_{n-3} & -(\lambda_{n-2} + \mu_{n-2}) & \lambda_{n-3} & 0\\
    \vdots & \dots & 0 & \mu_{n-1} & -(\lambda_{n-2} + \mu_{n-2}) & \lambda_{n-2}\\
    0 & \dots & 0 & 0 & \mu_{n-1} & -(\lambda_{n-1} + \mu_{n-1})
\end{bmatrix}
\]
\caption{Matrix $B$.}
\label{fig:matrixB}
\end{figure*}
Let $W_1'$ = $(\mu_1, 0, \dots, 0)$, $W_{n-1}'$ = $(0, \dots, 0, \lambda_{n-1})$. Then, we can express $Q$ as 
\[
    Q =
    \begin{bmatrix}
        0 & \dots & 0\\
        W_1 & B & W_{n-1}\\
        0 & \dots & 0
    \end{bmatrix}
\]
As a reminder, the stationary distribution can be found via $\lim_{t \rightarrow \infty} P(t) = e^{Qt}$, where we have
\[
    e^{Qt} = \sum_{i = 0}^{\infty} \frac{t^i}{i!} Q^i = \sum_{i = 0}^{\infty} \frac{t^i}{i!}
    \begin{bmatrix}
        0 & \dots & 0\\
        B^{i-1}W_1 & B^i & B^{i-1}W_{n-1}\\
        0 & \dots & 0
    \end{bmatrix}
\]
As Tan (eq. 2.3) shows, we have
\[
    \xi(t) = B^{-1}\left[\sum_{i = 0}^{\infty} B^i  - \mathbb{I}_{n-1} \right] W_1
\]
Since we want the ultimate probabilities, we have that 
\[
    \xi = \lim_{t \rightarrow \infty} \xi(t) = -B^{-1}W_1
\]
We can explicitly compute $\xi_\delta$ in terms of our rates $\mu_i$ and $\lambda_i$, getting 
\[
    \xi_\delta = \ddfrac{\sum_{l = 1}^{n-\delta}\prod_{i = 1}^{n-l}\mu_i \prod_{j = n-l+1}^{n-1}\lambda_j}{\sum_{l = 1}^{n}\prod_{i = 1}^{n-l}\mu_i \prod_{j = n-l+1}^{n-1}\lambda_j}
\]
However, we note that $u_{i} = \lambda_{n-i}$. Algebraic manipulation from this observation leads to the two equations in the theorem. This expression is strictly lower than the Chvatal bounds used in Section~\ref{sec:analysis}.
\end{proof}

Using the construction for the absorption (and (ir)reversibility) probabilities as discussed previously, a natural follow up computation is in regards to \emph{mean convergence time}. 
Let $T_{z}(t) = \inf \{t \geq 0 : X_t = \{0, n\} | X_0 = z\}$, and let $\tau_z = \mathbb{E}[T_{z}(t)]$. $\tau_z$ is the mean time to reach either absorbing state, starting from state $z$, which corresponds to the mean convergence time. The next theorem characterizes this distribution.

\begin{theorem}
\label{theorem:mean-convergence-time}
Let $\tau_z$ be the expected time to convergence, starting from state $z > n/2$, to any of the two converging states in the network (all-red or all-blue). Then, 
\begin{equation}
\tau_z = \ddfrac{\sum_{d = 1}^{n-1}x(d)y(d)}{2\sum_{l = 1}^{n/2}\prod_{i=1}^{l-1}\mu_i^2\prod_{j=l}^{n-l}\mu_j}
\end{equation}
where $x(d)$ and $y(d)$ are
\begin{equation}
\begin{split}
x(d) &= \sum_{l = 1}^{\min(z, d)} \prod_{i=1}^{l-1} \mu_i \prod_{j = l}^{d-1} \lambda_j\\
y(d) &= \sum_{l = 1}^{n - d - \max(z-d, 0)} \prod_{i = d+1}^{n-l} \mu_i \prod_{j = n - l + 1}^{n - 1} \lambda_j
\end{split}
\end{equation}
\end{theorem}

\begin{proof}
Following the calculations from before, $-B^{-1}$ at row $z$ provides the number of traversals to each other state starting from $z$. Calculating their sum, we have our result. The above equation is the full expression of the matrix row sum. 
\end{proof}

Theorem~\ref{theorem:mean-convergence-time} leads to the next lemma that captures property P2, under the assumption that at the beginning of the protocol, one proposal has at least $\alpha$ support in the network. 
\begin{lemma}
Slush reaches an absorbing state in finite time almost surely.
\label{lemma:finitetermination}
\end{lemma}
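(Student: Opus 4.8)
The plan is to treat $\mathbf{X}$ as a finite absorbing Markov chain and invoke the standard fact that such a chain leaves its transient states almost surely. The chain has only $c+1$ states $s_0,\dots,s_c$, of which $s_0$ and $s_c$ are absorbing by construction, so it suffices to prove that every interior state $s_i$ with $0<i<c$ is transient, i.e.\ that from $s_i$ there is a finite path of positive probability reaching $s_0$ or $s_c$. Once this is done, finite-time absorption follows by a uniform geometric bound.

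First I would analyze the support of the one-step transitions. Since $P(\mathcal{H}_{\mathcal{C},x}^{k}\ge\alpha k)>0$ exactly when the population contains at least $\alpha k$ red nodes, we get $p(s_i)>0$ iff $i\ge\alpha k$ and $q(s_i)>0$ iff $c-i\ge\alpha k$ (for $0<i<c$). Consequently, if $i\ge\alpha k$ then the monotone ``all-red'' path $s_i\to s_{i+1}\to\cdots\to s_c$ has every step of positive probability (each intermediate count is $\ge i\ge\alpha k$), so $s_c$ is reachable; symmetrically, if $c-i\ge\alpha k$ then the monotone ``all-blue'' path $s_i\to s_{i-1}\to\cdots\to s_0$ reaches $s_0$. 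In the parameter regime $c\ge 2\alpha k$ these two conditions cover every interior index, so no interior state is stuck and each is transient, consistent with the chain's construction. (When $c<2\alpha k$ an index with $c-\alpha k<i<\alpha k$ would satisfy $r(s_i)=1$ and thus be a spurious interior absorbing state, so this regime must be excluded.)

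With transience established, I would finish with an explicit geometric bound rather than appealing to an abstract theorem. For each interior $s_i$ fix a path as above of length $\ell_i\le c$ and probability $\rho_i>0$, and set $\delta=\min_{0<i<c}\rho_i>0$ and $L=\max_i\ell_i\le c$ (both extrema taken over finitely many states). Grouping time into blocks of $L$ consecutive steps, the Markov property gives that, conditioned on not yet being absorbed at the start of a block, the chain is absorbed during that block with probability at least $\delta$ (reaching an absorbing state within $\ell\le L$ steps suffices, since absorbing states are fixed). Hence $P(\text{not absorbed after }mL\text{ steps})\le(1-\delta)^{m}\to 0$ as $m\to\infty$, so the absorption time $T$ satisfies $P(T=\infty)=0$; that is, $\mathbf{X}$ reaches $s_0$ or $s_c$ in finite time almost surely.

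The main obstacle is the transience verification of the second paragraph: one must pin down precisely when the hypergeometric threshold probability is nonzero and confirm that the admissible ranges for pushing toward all-red and toward all-blue overlap to cover every interior state, which is exactly where the parameter assumption $c\ge 2\alpha k$ enters. The concluding block argument is routine once a uniform positive lower bound $\delta$ on the per-block absorption probability is secured.
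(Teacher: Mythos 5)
Your proof is correct and takes essentially the same approach as the paper's: both identify positive-probability monotone paths from interior states toward the absorbing states $s_0$ and $s_c$ and conclude almost-sure finite-time absorption. Yours is in fact more careful than the paper's short sketch --- the explicit geometric block bound and the parameter caveat $c \ge 2\alpha k$ (which rules out interior states with $r(s_i)=1$ that would make the claim fail) are both left implicit in the paper's own argument.
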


\begin{proof}
Starting from any non-absorbing, transient state, there is a non-zero probability of being absorbed. Additionally, since termination is finite and everywhere differentiable, Theorem~\ref{theorem:mean-convergence-time} also implies that the probability of termination of any network configuration where a proposal has $\geq \alpha$ support in bounded time $t_{max}$ is strictly positive. 
\end{proof}

\subsection{Snowflake}
\label{subsection:appendix_snowflake}
In Snowflake, the sampled set of nodes includes Byzantine nodes.
We introduce the decision function $\mathcal{D}(*)$, which is constructed by having each node also keep track of the total number of consecutive times it has sampled a majority of the same color ($\beta$). 
Finally, we introduce a function called $\mathcal{A}(\mathcal{S}_t)$, the adversarial strategy, that takes as parameters the entire configuration of the network at time $t$, as well as the next set of nodes chosen by the scheduler to execute, and as a side-effect, modifies the set of nodes $\mathcal{B}$ to some arbitrary configuration of colors.

In order for our prior framework to apply to Snowflake, we must deal with a key subtlety. 
Unlike in Slush, where it is clear that once the network has reached one of the converging states and therefore may not revert back, this no longer applies to Snowflake, since any adversary $f \geq \alpha$ has strictly positive probability of reverting the system, albeit this probability may be infinitesimally small. 
The CTMC is flexible enough to deal with a system where there is only one absorbing state, but the long-term behavior of the system is no longer meaningful since, after an infinite amount of time, the system is guaranteed to revert, violating safety. 
We could trivially bound the amount of time, and show safety using this bounded time assumption by simply characterizing the distribution of $e^{tQ}$, where $Q$ is the generator. 
However, we can make the following observation: if the probability of going from state $c$ (all-blue) to $c-1$ is exponentially small, then it will take the attacker exponential time (in expectation; note, this is a lower bound, and in reality it will take much longer) to succeed in reverting the system. 
Hence, we can assume that once all correct nodes are the same color, the attack from the adversary will terminate since it is impractical to continue an attack. 
In fact, under reasonably bounded timeframes, the variational distance between the exact approach and the approximation is very small. 
We leave details to the accompanying paper, but we briefly discuss how analysis proceeds for Snowflake. 

As stated in Section~\ref{sec:analysis}, the way to analyze the adversary using the same construction as in Slush is to condition reversibility on the first node $u$ deciding on blue, which can happen at any state (as specified by $\mathcal{D}(*)$). At that point, the adversarial strategy collapses to a single function, which is to continually vote for red. The probabilities of reversibility, for all states $\{1, \dots, c-1\}$ must encode the probability that additional blue nodes commit, and the single function of the adversary. The birth and death rates are transformed as follows:
\[
    \begin{cases}
        \mu_i = &i(1 - \mathbb{I}[\mathcal{D}(*, i, \mathbb{B})])\ \mathcal{H}(\mathcal{N}, c-i + f, k, \alpha)\\
        \lambda_i = &(c-i)(1 - \mathbb{I}[\mathcal{D}(*, c-i, \mathbb{R})])\ \mathcal{H}(\mathcal{N}, i, k, \alpha)\\
    \end{cases}
\]
From here on, the analysis is the same as in Slush. Under various $k$ and $\beta$, we can find the minimal $\alpha$ that provides the system strong irreversibility properties. 

The next lemma captures P3, and the proof follows from central limit theorem. 
\begin{lemma}
If $f < \Oh{\sqrt{n}}$, and $\alpha = \floor{k/2} + 1$, then Snowflake terminates in $\Oh{\log n}$ rounds with high probability. 
\label{lemma:centrallimit}
\end{lemma}
\begin{proof}
The results follows from central limit theorem, wherein for $\alpha = \floor{k/2} + 1$, the expected bias in the network after sampling will be $\Oh{\sqrt{n}}$. An adversary smaller than this bias will be unable to keep the network in a fully-bivalent state for more than a constant number of rounds. The logarithmic factor remains from the mixing time lower bound. 
\end{proof}

\subsection{Snowball}
We make the following observation: if the confidences between red and blue are equal, then the adversary has the same identical leverage in the irreversibility of the system as in Snowflake, regardless of network configuration. In fact, Snowflake can be viewed as Snowball but where drifts in confidences never exceed one. The same analysis applies to Snowball as in Snowflake, with the additional requirement of bounding the long-term behavior of the confidences in the network. To that end, analysis follows using martingale concentration inequalities, in particular the one introduced in Equation~\ref{eq:submartingale}. Snowball can be viewed as a two-urn system, where each urn is a sub-martingale. The guarantees that can be extracted hereon are that the confidences of the majority committed value (in our frame of reference is always blue), grow always more than those of the minority value, with high probability, drifting away as $t \rightarrow t_{max}$. 
\subsection{Safe Early Commitment}
As we reasoned previously, each conflict set in Avalanche can be viewed as an instance of Snowball, where each progeny instance iteratively votes for the entire path of the ancestry.
This feature provides various benefits; however, it also can lead to some virtuous transactions that depend on a rogue transaction to suffer the fate of the latter.
%has the drawback that it can entangle the fate of some unfortunate virtuous transactions with the fate of rogue ones.
In particular, rogue transactions can interject in-between virtuous transactions and reduce the ability of the virtuous transactions to ever reach the required $\textsc{isAccepted}$ predicate.
As a thought experiment, suppose that a transaction $T_i$ names a set of parent transactions that are all decided, as per local view.
If $T_i$ is sampled over a large enough set of successful queries without discovering any conflicts, then, since by assumption the entire ancestry of $T_i$ is decided, it must be the case (probabilistically) that we have achieved irreversibility.

To then statistically measure the assuredness that $T_i$ has been accepted by a large percentage of correct nodes without any conflicts, we make use of a one-way birth process, where a birth occurs when a new correct node discovers the conflict of $T_i$. Necessarily, deaths cannot exist in this model, because a conflicting transaction cannot be unseen once a correct node discovers it. 
% Let $t = 0$ be the time when $T_j$, which conflicts with $T_i$, is introduced to a single correct node $u$.
% Let $s_x$, for $x = 1$ to $c$, be the state where the number of correct nodes that know about $T_j$ is $x$, and let $p(s_x)$ be the probability of birth at state $s_x$. Then, we have:
Our births are as follows:
\begin{equation}
    \lambda_i = \frac{c - i}{c} \left(1 - \frac{{n - i \choose k}}{{n \choose k}}\right)
\end{equation}
Solving for the expected time to reach the final birth state provides a lower bound to the $\beta_1$ parameter in the $\textsc{isAccepted}$ fast-decision branch. The table below shows an example of the analysis for $n = 2000$, $\alpha = 0.8k$, and various $k$, where $\varepsilon \ll 10^{-9}$, and where $\beta$ is the minimum required value before deciding.
\begin{table}[h!]
    \small
	\centering
	\begin{tabular}{llllll}
		$k$   & 10 & 20 & 30 & 40 \\ \hline
		$\beta$ & 10.87625 & 10.50125 & 10.37625 & 10.25125
	\end{tabular}
	\label{table:fast-path-beta}
\end{table}
% \begin{figure}
%     \includegraphics[width=\linewidth]{}
%     \captionof{figure}{An example of $\beta$ solutions for different $k$, with $n = 2000, \alpha = 0.8k$.}
%     \label{fig:fast-path-beta}
% \end{figure}
\noindent Overall, a very small number of iterations are sufficient for the safe early commitment predicate. This supports the choice of $\beta$ in our evaluation.

\subsection{Initialization Heuristic}
\label{sec:sync-heuristic}
To improve liveness, we can use strong synchrony assumptions. The heuristic works as follows. Every node operates in two phases: in the first phase, it gossips and collects proposals for $\Oh{\log{n}}$ rounds, where each round lasts for the maximum message delay, which ensures the proposal from a correct node will propagate to almost all other correct nodes; in the second phase, each node stops collecting proposals, and gossips the existing proposals for an additional $\Oh{\log{n}}$ rounds so that every correct node ends up with approximately the same set of proposals. Finally, each node samples for the proposals it knows of locally, checking for those that have an $\alpha$ majority, ordered deterministically, such as by hash values. It then selects the first value by the order as its initial state when it starts the actual consensus protocol.
In a cryptocurrency setting, the deterministic ordering function would incorporate fees paid out for every new proposal, which means that the adversary is financially limited in its ability to launch a fairness attack against the initialization.

\subsection{Churn and View Discrepancy}\label{sec:full-analysis-churn}

Realistic systems need to accommodate the departure and arrival of nodes.
Up to now, we simplified our analysis by assuming a precise knowledge of network membership, i.e. $\mathcal{L}(u) = \mathcal{N}$.
We now demonstrate that correct nodes can admit a well-characterized amount of churn, by showing how to pick parameters such that Avalanche nodes can differ in their view of the network and still safely make decisions.

{\color{black} 

To characterize churn we use a generalized set intersection construction that allows us to make arguments about worst-case network view splits. Before formalizing it, we provide the intuition: suppose we split the network into two entirely independent, but fully connected, subsets. Clearly, the Byzantine adversary wins with probability one since it can send two conflicting transactions to the two independent networks respectively, and they would finalize the transactions immediately. This represents the worst case view split. 
\begin{figure}
\centering
\includegraphics[width=0.8\linewidth]{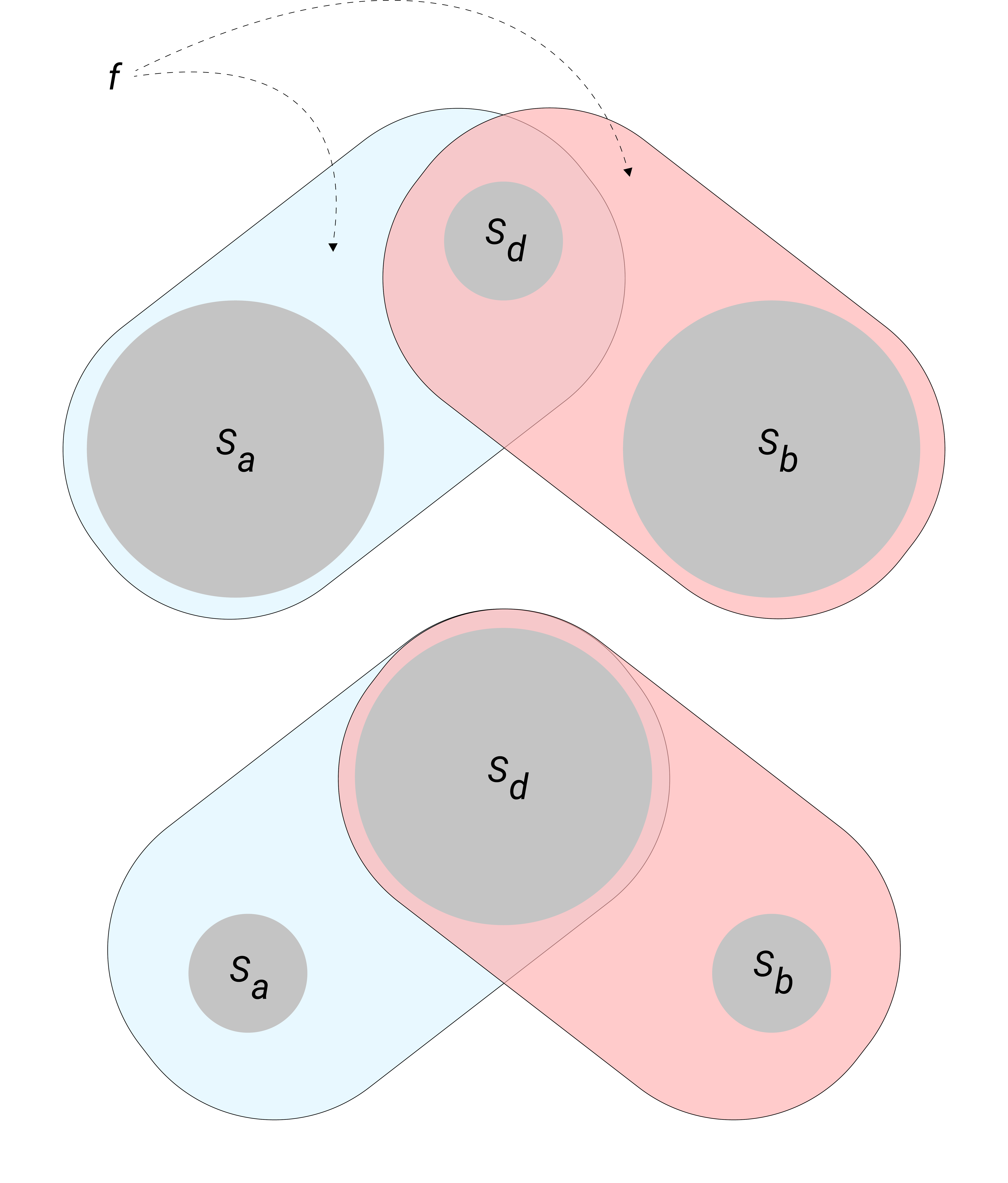}
\caption{Changes in network view based on $S_d$'s size. All prior proofs up to now represent the variant where $S_a = S_b = \emptyset$. With the new construction, the probability of safety violation is simply a direct application of the prior sets of proofs under the new subsets.}
\label{fig:network_view}
\end{figure}
We can generalize this to arbitrary network splits, which can even be applied recursively in each subset. The proofs of safety then are a matter of characterizing the probability of red and blue committing into the two (possibly independent) subsets.

Suppose we divide the set of correct nodes into three subsets, $S_a$, $S_d$, $S_b$. We overload $\mathcal{L}(S_{*})$ to represent the views of any node within the input set. The view of all nodes in $S_a$ is $\mathcal{L}(S_a) = S_a \cup S_d \cup \mathcal{B}$, the view of all nodes in $S_b$ is $\mathcal{L}(S_b) = S_b \cup S_d \cup \mathcal{B}$, and the view of $S_d$ is $\mathcal{N}$. We assume the worst case, which means that adversarial nodes are common to all subsets. When $S_d = \emptyset$, this represents a division of the network into two equal subsets where $|S_a| = |S_b| = n/2$~\footnote{$|S_a|$ and $|S_b|$ do not have to be equal, we assume so as a demonstration.}. If $S_d$ is all correct nodes, then $|S_a| = |S_b| = n$. This construction is visually demonstrated in Figure~\ref{fig:network_view}. 

\begin{lemma}
Let $\tau \in \mathbb{Z}^{+}$. Let $|S_d| = n - \tau$, and thus $|S_{\{a, b\}}| = \tau/2$. There exists some maximal size of $\tau$ such that probability of any two nodes $u, v \in S_a,\ S_b,\ S_d$ finalizing equivocating transactions is less than $\epsilon$. 
\end{lemma}
\begin{proof}
We assume that the adversary has full control of the network view splits, meaning that they choose in full how to create $S_a,\ S_b,\ S_d$. To prove safety, we simply reuse the same exact construction as in Subsection~\ref{subsection:appendix_snowflake}, but we replace the original set $\mathcal{N}$ with a new set of interest, namely $S_a \cup S_d \cup \mathcal{B}$ (i.e., we exclude $S_b$)~\footnote{Sets are symmetric in this example.}. To thus find the maximal $\tau$, we simply replace $u_i$ and $\lambda_i$ with 
\begin{equation}
    \begin{cases}
        \bar \mu_i = i\ \mathcal{H}(S_a \cup S_d \cup \mathcal{B}, c-(\tau/2)-i, k, \alpha), & \text{for } i \rightarrow i - 1 \\
        \lambda_i = (c-i)\ \mathcal{H}(S_a \cup S_d \cup \mathcal{B}, i, k, \alpha), & \text{for } j \rightarrow i + 1 \\
    \end{cases}
\end{equation}
where
\begin{equation}
\begin{split}
P(\mathcal{H}&(S_a \cup S_d \cup \mathcal{B}, x, k) \geq \alpha)\\
&= \left.\sum_{j = \alpha}^{k} {x \choose j}{c-(\tau/2)-x \choose k-j}\middle/{c - (\tau/2) \choose k}\right.\\
\end{split}
\end{equation}
and apply the same construction as in Subsection~\ref{subsection:appendix_snowflake}. As $\tau$ increases, it is clear that $S_a$ (and conversely, $S_b$), become more independent and less reliant on the values proposed by members in $S_d$, thus incrementing the ability of the adversaries. 
\end{proof}

}

\end{appendices}
\end{document}